\def\restrict#1{\raise-.5ex\hbox{\ensuremath|}_{#1}}
\title{Efficient Differentially Private $F_0$ Linear Sketching} 
\titlerunning{Efficient Differentially Private $F_0$ Linear Sketching} 
\author{Rasmus Pagh}{IT University of Copenhagen \and BARC}{pagh@itu.dk}{https://orcid.org/0000-0002-1516-9306}{}
\author{Nina Mesing Stausholm}{IT University of Copenhagen \and BARC}{nimn@itu.dk}{https://orcid.org/0000-0002-4322-7163}{}
\authorrunning{R. Pagh and N. M. Stausholm} 
\keywords{Differential Privacy, Linear Sketches, Weighted F0 Estimation} 
\begin{document}

\maketitle

\begin{abstract}
A powerful feature of \emph{linear sketches} is that from sketches of two data vectors, one can compute the sketch of the difference between the vectors.
This allows us to answer fine-grained questions about the difference between two data sets. 
In this work we consider how to construct sketches for weighted $F_0$, i.e., the summed weights of the elements in the data set, that are small, differentially private, and computationally efficient. 
Let a weight vector $w\in(0,1]^u$ be given. For $x\in\{0,1\}^u$ we are interested in estimating $\Vert x\circ w\Vert_1$ where $\circ$ is the Hadamard product (entrywise product).

Building on a technique of Kushilevitz et al.~(STOC 1998), we introduce a sketch (depending on $w$) that is linear over GF(2), mapping a vector $x\in \{0,1\}^u$ to $Hx\in\{0,1\}^\tau$ for a matrix $H$ sampled from a suitable distribution $\mathcal{H}$.
Differential privacy is achieved by using \emph{randomized response}, flipping each bit of $Hx$ with probability $p<1/2$.
That is, for a vector $\varphi \in \{0,1\}^\tau$ where $\Pr[(\varphi)_j = 1] = p$ independently for each entry $j$, we consider the \emph{noisy sketch} $Hx + \varphi$, where the addition of noise happens over GF(2).
We show that for every choice of $0<\beta < 1$ and $\varepsilon=O(1)$ there exists $p<1/2$ and a distribution $\mathcal{H}$ of linear sketches of size $\tau = O(\log^2(u)\varepsilon^{-2}\beta^{-2})$ such that:
\begin{enumerate}
\item For random $H\sim\mathcal{H}$ and noise vector $\varphi$, given $Hx + \varphi$ we can compute an estimate of $\Vert x\circ w\Vert_1$ that is accurate within a factor $1\pm\beta$, plus additive error $O(\log(u)\varepsilon^{-2}\beta^{-2})$, w.~p.~$1-u^{-1}$, and
\item For every $H\sim\mathcal{H}$, $Hx + \varphi$ is $\varepsilon$-differentially private over the randomness in $\varphi$.
\end{enumerate}
The special case $w=(1,\dots,1)$ is \emph{unweighted} $F_0$.
Previously, Mir et al.~(PODS 2011) and Kenthapadi et al.~(J.~Priv.~Confidentiality 2013) had described a differentially private way of sketching unweighted $F_0$, but the algorithms for calibrating noise to their sketches are not computationally efficient, either using quasipolynomial time in the sketch size or superlinear time in the universe size $u$.
        
For fixed $\varepsilon$ the size of our sketch is polynomially related to the lower bound of $\Omega\left(\log(u)\beta^{-2}\right)$ bits by Jayram \& Woodruff~(Trans.~Algorithms 2013). The additive error is comparable to the bound of $\Omega\left(1/\varepsilon \right)$ of Hardt \& Talwar~(STOC 2010).
An application of our sketch is that two sketches can be added to form a noisy sketch of the form $H(x_1+x_2) + (\varphi_1+\varphi_2)$, which allows us to estimate $||(x_1+x_2)\circ w||_1$.
Since addition is over GF(2), this is the weight of the symmetric difference of the vectors $x_1$ and $x_2$.
Recent work has shown how to privately and efficiently compute an estimate for the symmetric difference size of two sets using (non-linear) sketches such as FM-sketches and Bloom Filters, but these methods have an error bound no better than $O(\sqrt{\bar{m}})$, 
where $\bar{m}$ is an upper bound on $||x_1||_0$ and $||x_2||_0$.
In particular, our result improves previous work when $\beta  = o\left(1/\sqrt{\bar{m}}\right)$ and $\log(u)/\varepsilon = \bar{m}^{o(1)}$. 

In conclusion our results both improve the efficiency of existing methods for unweighted $F_0$ estimation and extend to a weighted generalization. We also give a distributed streaming implementation for estimating the size of the union between two input streams.
\end{abstract}
 
\section{Introduction}
Estimating the number of distinct values in a set (its \emph{cardinality}), without explicitly enumerating the set, is a classical and important problem in data management.
Sampling-based methods~\cite{haas1995sampling} can in many cases be improved by using algorithms designed with data streams in mind~\cite{kane2010optimal}.
Streaming algorithms based on \emph{linear sketches} can also be used to estimate changes as a data set evolves~\cite{kifer2004detecting} and for approximate query processing in distributed settings~\cite{alon2002tracking, CormodeGHJ12}.
As our first motivating example consider the following SQL query:
\begin{verbatim}
    SELECT P.name
    FROM EMPLOYEES E, HOSPITALIZATION H
    WHERE E.salary > 100000 AND E.name = H.name AND H.year = 2020
\end{verbatim}
\noindent
The size (in bytes) of the query result is a sum weighted by string length over the names that appear in subsets of two relations. That is, estimating the size of the join result is about estimating the \emph{weighted} size of a set intersection.

\medskip

In recent years, \emph{privacy} of database records has become increasingly important when releasing aggregates from a database.
In the example above, the information that a tuple with a particular person exists (and satisfies a certain predicate) can potentially be sensitive.
If the database is distributed, with relations on different servers that are not allowed to expose sensitive information, it is not trivial how to even estimate the join size.

The notion of \emph{differential privacy}~\cite{DworkMNS06} has emerged as the leading approach to providing rigorous privacy guarantees.
It is known that differential privacy comes with pitfalls~\cite{kifer2011no}, but work in the database community has led to privacy-preserving database systems supporting (limited) SQL, see e.g.~\cite{mcsherry2009privacy, wilson2020differentially} and their references.
A challenge in such systems is that the set of queries is often not known ahead of time, so \emph{budgeting} the disclosure of detailed information is highly nontrivial.
An attractive approach to achieving privacy even when faced with unknown queries is to release a summary, or sketch, of the data set from which approximate answers to queries can be computed (as a side effect this also eliminates the need for interaction).
In this paper we consider private linear sketches for the problem of cardinality estimation.

\medskip

{\bf Example.} Suppose that the company Acme Corporation runs an employee satisfaction survey once a year. Management at Acme Corporation made some drastic changes over the past year, and they wish to analyze the impact of these changes on the employees' satisfaction. For a specific improvement, every employee is given a value between 0 and 1, indicating how closely related that improvement is to the employee's work life.
A survey for each improvement is run by a consultant who delivers a summary of the results to the management at Acme Corporation. The consultant ensures that the summary is private, so individual employees cannot be identified from the summary. 
The management at Acme Corporation can combine the summary from last year's survey with the summary from this year's survey to estimate the change in satisfaction over the past year, where the vote of an employee is weighted by the value that employee was given. We note that the summaries should be generated in the same way, but the choice of consultant may change from year to year.
\medskip

More formally, we consider two players that hold sets $A$ and $B$ from a universe $U=\{1,...,u\}$, respectively. For every element $j\in U$ let a fixed, public weight, $w_j\in(0,1]$ be given and for input set $A\subseteq U$ consider the corresponding weight vector $(w_A)_j~=~w_j~\cdot~\mathbf{1}[j\in A]$. The goal is to estimate the weight of the symmetric difference $\Vert w_{A\triangle B}\Vert_1$, in a differentially private manner. We refer the reader to Section~\ref{sec:DPDefinition} for the basics of differential privacy. We may think of the sets as two lists of employees. Given input sets $A$ and $B$, the two players each compute a \emph{linear sketch} of their own set and add noise to obtain privacy as described in Section~\ref{sec:techniques}. These noisy sketches can be thought of as the summaries.

For input sets $A$ and $B$, we note that if we, along with the estimate of the weight of the symmetric difference, have estimates of $\Vert w_A\Vert_1$ and $\Vert w_B\Vert_1$, then we can also estimate $\Vert w_{A\cup B}\Vert_1$, $\Vert w_{A\cap B}\Vert_1$, $\Vert w_{A\backslash B}\Vert_1$ and $\Vert w_{B\backslash A}\Vert_1$ as argued in Section~\ref{sec:merging}. To make this possible, each player also outputs a differentially private version of their set weight. We remark that if all weights $w_j=1$, then the problem reduces to estimating the set \emph{size}, a problem often referred to as $F_0$.

We define and construct a noisy linear sketch over GF(2), the field of size 2, with the following properties:
\begin{itemize}
    \item $\varepsilon$-differentially private
    \item Computationally efficient
    \item Allows estimating the weight of the symmetric difference with small relative error
    \item Space usage is polynomially related to the lower bound (for fixed $\varepsilon$)
\end{itemize}

Previously known results satisfy at most 3 of these properties, see Figure~\ref{table:overview} for an overview. We discuss previous work further in Section \ref{sec:relatedwork}.
Our sketch can be computed and stored for future use, meaning that two players do not have to be active simultaneously but can compute and publish their sketches when they are ready. 
A self-contained description of our linear sketch can be found in Section~\ref{sec:techniques}.
Readers familiar with the sketching literature will realize that our sketch combines a method of Kushilevitz, Ostrovsky, and Rabani~\cite{KushilevitzOR98} with a standard hashing-based subsampling technique (see, e.g.,~\cite{Woodruff14}), and we use a Randomized Response Technique~\cite{RandomizedResponse} with noise parameter $p(\varepsilon)$, to get $\varepsilon$-differential privacy.
Hence, refer to our sketch as the \emph{KOR sketch} and to its noisy counterpart as a \emph{noisy KOR sketch}. 
We note that a related, but non-linear and non-private, sketch has previously been used for estimating size of symmetric difference~\cite{mitzenmacher2014efficient}.
From now on we leave out $\varepsilon$ in the noise parameter and write simply $p$.
We show that the KOR sketch is sufficiently robust to noise to allow precise estimation after adding noise, thus allowing pure differential privacy.


We next give an overview of our techniques, discussed in depth in Section~\ref{sec:techniques}. Let $U=\{1,...,u\}$ be the universe from which the input sets are taken. Privacy parameter $\varepsilon$ and accuracy parameter $\beta$ are given, and a sketch size $\tau$ is determined by these parameters. 
We show in Section~\ref{sec:estimateAnalysis} that we can construct an $\varepsilon$-differentially private sketch from which we can compute a $(1+\beta)$-approximation for the weight of the symmetric difference with high probability.

Randomized response~\cite{RandomizedResponse} is applied to the entire sketch $Hx$, meaning that each entry of the sketch is flipped with probability $p < 1/2$. We show in Section~\ref{sec:DPGuarantees} how to choose $p$ as a function of $\varepsilon$ to ensure $\varepsilon$-differential privacy for the sketch. Let $x\circ w$ denote the Hadamard product. Our main theorem is:
\begin{restatable}[Noisy KOR sketch]{theorem}{main}
\label{thm:main}
Let $w\in(0,1]^u$ be given.
For every choice of $0<\beta<1$ and $\varepsilon =O(1)$ there exists a distribution $\mathcal{H}$ over GF(2)-linear sketches
mapping a vector $x\in\{0,1\}^u$ to $\{0,1\}^\tau$, where $\tau~=~O\left(\log^2(u)\varepsilon^{-2}\beta^{-2}\right)$, and a distribution $\mathcal{N}_\varepsilon$ over noise vectors such that: 
\begin{enumerate}
\item For $H\sim\mathcal{H}$ and $\varphi\sim\mathcal{N}_\varepsilon$, given $Hx + \varphi$ we can compute, in time $O(\tau)$, an estimate $\hat{w}$ of $\Vert x\circ w\Vert_1$ that with probability $1-1/u$ satisfies $|\hat{w} - \Vert x\circ w\Vert_1| < \beta \Vert x\circ w\Vert_1 + O\left(\log(u)\varepsilon^{-2}\beta^{-2}\right)$.
\item For every $H$ in the support of $\mathcal{H}$, $Hx + \varphi$ is $\varepsilon$-differentially private over the choice of $\varphi\sim\mathcal{N}_\varepsilon$, and can be computed in time $O(\Vert x\Vert_0 \log(u) + \tau)$, including time for sampling $\varphi$.
\end{enumerate}
\end{restatable}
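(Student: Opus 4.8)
The plan is to build the sketch in three conceptual layers and then bound the error and privacy loss for each. First, I would recall the Kushilevitz--Ostrovsky--Rabani (KOR) construction: for a single subset $S\subseteq U$, pick a uniformly random vector $r\in\{0,1\}^u$ (biased in the right way) and output the bit $\langle r,x\rangle \bmod 2$; the probability this bit equals $1$ is a known function of the weight $\Vert x\circ w\Vert_1$ restricted to $S$ (via the standard identity that a random $\pm1$ combination vanishes with probability $\tfrac12(1+\prod(1-2\rho_j))$, where $\rho_j$ is the inclusion probability of coordinate $j$). Averaging $\Theta(\beta^{-2}\varepsilon^{-2})$ independent such bits and inverting the function recovers $\Vert x\circ w\Vert_1$ up to a $(1\pm\beta)$ factor, \emph{provided} the true weight lies in a bounded dynamic range, say $[\lambda,2\lambda]$. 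To cover all scales I would run $O(\log u)$ independent copies of this at geometrically increasing scales, using the hashing-based subsampling trick (hash $U$ into buckets and restrict to a bucket of the appropriate expected size), which accounts for one logarithmic factor in $\tau$; the second logarithmic factor comes from boosting each copy's success probability to $1-u^{-2}$ by a median-of-$O(\log u)$ repetitions, then union-bounding. Concatenating all these bits over all scales is the matrix $H$, and it is linear over GF(2) by construction. Reading off the estimate is just averaging and thresholding over $O(\tau)$ bits, giving the $O(\tau)$ decoding time, and the $O(\Vert x\Vert_0\log u+\tau)$ encoding time follows because each nonzero coordinate of $x$ participates in $O(\log u)$ of the scale-copies (one per level of the subsampling hash), plus the cost of sampling $\tau$ Bernoulli$(p)$ noise bits.

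Second, for the accuracy claim I would track how the randomized-response noise distorts the mean of each block of sketch bits. Flipping a bit with probability $p$ replaces a Bernoulli$(q)$ bit by a Bernoulli$(q(1-p)+(1-q)p)=\,$Bernoulli$\big(p+(1-2p)q\big)$ bit, so the signal is merely rescaled by the known constant $(1-2p)$ and shifted by the known constant $p$; after averaging we can undo both exactly in expectation. The price is variance inflation by a factor $\Theta((1-2p)^{-2})=\Theta(\varepsilon^{-2})$ (since $1-2p=\Theta(\varepsilon)$ for the $p$ chosen in Section~\ref{sec:DPGuarantees}), which is exactly why the block size carries an $\varepsilon^{-2}$ factor. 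A Chernoff/Hoeffding bound on each block, followed by the union bound over the $O(\log u)$ scales and the median trick, yields the relative error $\beta\Vert x\circ w\Vert_1$ on the correct scale. When $\Vert x\circ w\Vert_1$ is small — below roughly $\log(u)\varepsilon^{-2}\beta^{-2}$ — no scale has enough ``resolution,'' and the estimate can only be pinned down to within that additive term; this is the source of the stated $O(\log(u)\varepsilon^{-2}\beta^{-2})$ additive error, and I would handle it by a separate case analysis showing the smallest-scale block already certifies $\Vert x\circ w\Vert_1=O(\log(u)\varepsilon^{-2}\beta^{-2})$ with high probability.

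Third, privacy is the cleanest part and I would invoke the result of Section~\ref{sec:DPGuarantees}: conditioned on any fixed $H$ in the support of $\mathcal{H}$, the map $x\mapsto Hx$ is a deterministic function into $\{0,1\}^\tau$, and two neighboring inputs $x,x'$ differ in one coordinate, so $Hx$ and $Hx'$ differ in at most the number of $1$s in that column of $H$. Because randomized response on a single bit is $\ln\frac{1-p}{p}$-DP and DP composes, the whole noisy sketch is $\big(\tau_{\max}\ln\frac{1-p}{p}\big)$-DP where $\tau_{\max}$ is the maximum column weight; choosing $H$'s columns to have weight $O(1)$ per scale and $p$ accordingly gives $\varepsilon$-DP for \emph{every} $H$ in the support, as claimed. (One must be slightly careful that the KOR/subsampling construction indeed keeps column weights bounded on each level, which it does because each coordinate lands in exactly one subsampling bucket per level.)

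The main obstacle I expect is the accuracy analysis under noise in the regime where the true weight is near a scale boundary or small: one has to argue that the scale-selection rule — presumably ``pick the coarsest scale whose observed block-mean is bounded away from its saturated value $p+(1-2p)/2$ by more than the noise fluctuation'' — is simultaneously robust (doesn't pick a scale that's too fine and hence too noisy) and accurate (the chosen scale really does have $\rho_j$'s of the right order so that the inversion function is well-conditioned, i.e., its derivative is $\Theta(1/\lambda)$ there). Making the constants in the two logarithmic factors and the two $\varepsilon^{-2},\beta^{-2}$ factors come out right, and showing the failure probabilities across all $O(\log u)$ scales and $O(\log u)$ medians union-bound to $u^{-1}$, is the bulk of the work; everything else (linearity, running times, privacy) is essentially bookkeeping on top of the cited tools.
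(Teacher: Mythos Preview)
Your overall architecture is right, but there is a real gap where the median trick meets the privacy argument.

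You attribute the second $\log u$ factor in $\tau$ to a median-of-$O(\log u)$ repetition. That choice silently breaks the sensitivity bound you need for privacy. With $O(\log u)$ independent repetitions (independent subsampling hashes or independent KOR hashes), a fixed coordinate $j$ lands in a nonzero column position in up to $\Theta(\log u)$ of the repetitions, so the column weight $\tau_{\max}$ of the full matrix $H$ is $\Theta(\log u)$, not $O(1)$. Your own privacy step then yields only $\Theta(\log u)\cdot\ln\tfrac{1-p}{p}$-DP, forcing $1-2p=\Theta(\varepsilon/\log u)$. But that inflates the variance factor $(1-2p)^{-2}$ by $\log^2 u$, so each block must grow by $\log^2 u$, and the final sketch size becomes $\Theta(\log^4(u)\,\varepsilon^{-2}\beta^{-2})$ rather than the claimed $\Theta(\log^2(u)\,\varepsilon^{-2}\beta^{-2})$. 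The parenthetical you added (``each coordinate lands in exactly one subsampling bucket per level'') is about within-level column weight; what you actually need is a bound on the column weight across the \emph{entire} stacked matrix $H$.

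The paper sidesteps this by \emph{not} using the median trick. It gets the second $\log u$ factor by making each of the $\log u$ levels large enough, namely $n=\Theta(\log(u)\,\varepsilon^{-2}\beta^{-2})$, so that a single Chernoff bound already gives failure probability $O(1/u^2)$ per level, and a union bound over the $\log u$ levels finishes. The crucial structural point is that the sampling intervals $\bigl(w_j/2^{i+1},\,w_j/2^i\bigr]$ are \emph{disjoint} across levels $i$, so every coordinate $j$ is sampled at \emph{at most one} level; combined with the fact that within a level $j$ hits exactly the single row $h(j)$, the entire stacked matrix $H$ has column weight at most~$1$. Randomized response on a single bit then suffices for $\varepsilon$-DP with $p=1/(2+\varepsilon)$, i.e., $1-2p=\Theta(\varepsilon)$, and the variance inflation stays at $\Theta(\varepsilon^{-2})$ as you wanted. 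If you drop the median and instead enlarge the per-level block, your outline goes through essentially as the paper's proof does; the scale-selection rule in the paper (intersect the confidence intervals $I_i$ and test whether the intersection is narrow) is a variant of what you sketched, and your case analysis for small $\Vert x\circ w\Vert_1$ matches the paper's handling of $\Vert w\Vert_1<n$.
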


\medskip

The assumption that $\varepsilon = O(1)$ is not essential, and is only made to simplify our bounds (which do not improve for privacy parameter $\varepsilon = \omega(1)$).
Without loss of generality we can assume that parameter $\beta$ is such that the error is dominated by $\beta \Vert x\circ w\Vert_1$, because reducing $\beta$ further cannot reduce error by more than a factor~2.
In the unweighted case, setting $\beta = \sqrt[3]{\log(u)/(\varepsilon^2 m)}$ to balance relative and additive error we get error $\tilde{O}(m^{2/3}/\varepsilon^{2/3})$, where the $\tilde{O}$ notations suppresses a polylogarithmic factor. 
This is polynomially related to known lower bounds described in section~\ref{sec:lower}.

\begin{figure*}
    \centering
    \makebox[\textwidth]{
    \begin{tabular}{|l|c|c|c|c|c|}
        {\bf Reference} & \begin{tabular}{c}{\bf Diff.}\\{\bf privacy}\end{tabular} &  \begin{tabular}{c}{\bf Additive}\\{\bf error}\end{tabular} & \begin{tabular}{c}{\bf Relative}\\{\bf error}\end{tabular} & \begin{tabular}{c}{\bf Initial.}\\{\bf time}\end{tabular} &  \begin{tabular}{c}{\bf Space}\\{\bf usage}\end{tabular}\\
        \hline
        \hline
        Hardt and Talwar~\cite{HardtT10} & $\varepsilon$ & $\Omega(1/\varepsilon)$ & -- & -- & --\\
        \hline
        McGregor et al.~\cite{mcgregor2010limits} & $\varepsilon$ & $\tilde{\Omega}(\sqrt{m}/e^\varepsilon)$ & -- & -- & --\\
        \hline
        Jayram and Woodruff~\cite{JayramW13} & -- & -- & $1+\beta$ & -- & $\tilde{\Omega}(1/\beta^2)$ \\
        \hline
        \hline
        Kane et al.~\cite{kane2010optimal} & -- & $\tilde{O}(1)$ & $1+\beta$ & $O(1)$ & $\tilde{O}(1/\beta^2)$ \\
        \hline
        Mir et al.~\cite{mir2011pan} & $\varepsilon$ & $\tilde{O}( m^{1-\Omega(1)} / \varepsilon^{O(1)})$ & $1+\beta$ & $\exp((\varepsilon\beta)^{-O(1)})$ & $\tilde{O}((\varepsilon\beta)^{-O(1)})$\\
        \hline
        Kenthapadi et al.~\cite{KenthapadiKMM13} & $(\varepsilon,\delta)$ & $\tilde{O}(\sqrt{m}/\varepsilon)$ & $1+\beta$ & $\tilde{\Omega}(u)$ & $\tilde{O}(1/\beta^{2})^*$ \\
        \hline
        Stanojevic et al.~\cite{StanojevicNY17} & $\varepsilon$ & $\tilde{O}(\sqrt{|A\cup B|}/\varepsilon^2)$ & -- & $\Omega(|A|+|B|)$ &  $\Omega(|A|+|B|)$\\
        \hline
        {\bf This paper} & $\varepsilon$ & $\tilde{O}(m^{2/3}/\varepsilon^{2/3})$ & $1+\beta$ & $\tilde{O}(\varepsilon^{-2}\beta^{-2})$ & $\tilde{O}\left(\varepsilon^{-2}\beta^{-2}\right)$\\
        \hline
    \end{tabular}
    }
    \caption{Selected lower bounds (top part) and upper bounds (bottom part) for estimating the (unweighted) size of the symmetric difference $m = \vert A\triangle B\vert$ from small sketches of sets $A, B \subseteq \{1, \dots, u\}$. Bounds stated as $\tilde{O}$ and $\tilde{\Omega}$ are simplified by suppressing multiplicative factors polynomial in $\log(1/\varepsilon)$, $\log(1/\beta)$, $\log(1/\delta)$, and $\log u$. The non-private bounds in~\cite{JayramW13, kane2010optimal} improve previous results by an $\tilde{O}(1)$ factor, we refer to their references for details. $^*$ The space usage of~\cite{KenthapadiKMM13} is measured in terms of real numbers; it is unclear how much space a private, discrete implementation would need.}
    \label{table:overview}
\end{figure*}

\paragraph*{Applications}
Suppose that Alice holds set $A$ with corresponding characteristic vector $x_A\in\{0,1\}^u$ and Bob holds set $B$ with characteristic vector $x_B\in\{0,1\}^u$. They jointly sample $H\sim\mathcal{H}$ and privately sample $\varphi_A, \varphi_B \sim \mathcal{N}_\varepsilon$ according to Theorem~\ref{thm:main}.
Then $Hx_A + \varphi_A$ and $Hx_B + \varphi_B$ are $\varepsilon$-differentially private.
Furthermore, $(Hx_A + \varphi_B) + (Hx_B + \varphi_B) = (Hx_A + Hx_B) + (\varphi_A + \varphi_B)$, and we show in Section~\ref{sec:merging} that $\varphi_A + \varphi_B ~\sim \mathcal{N}_{\varepsilon'}$ with $\varepsilon' = \varepsilon^2 /(2+2\varepsilon)$.
In Section~\ref{sec:estimateAnalysis} we use this in conjunction with Theorem~\ref{thm:main} to establish:
\begin{restatable}{corollary}{application}
\label{cor:symmetric-difference}
For accuracy parameter $\beta > 0$, consider an $\varepsilon$-differentially private noisy KOR sketch for a set $A$ and an $\varepsilon$-differentially private noisy KOR sketch for a set $B$, based on the same linear sketch $H\sim\mathcal{H}$, sampled independently of $A$ and $B$.
We can compute an approximation $\hat{\Delta}$ of the weight of the symmetric difference, such that with probability $1-1/u$:
\[ \vert \Vert w_{A\triangle B}\Vert_1 - \hat{\Delta}\vert < \beta \Vert w_{A\triangle B}\Vert_1 + \text{\em poly}(1/\varepsilon, 1/\beta, \log u)\enspace .\]
\end{restatable}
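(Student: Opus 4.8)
The plan is to reduce the statement to part~1 of Theorem~\ref{thm:main}, exploiting that the KOR sketch is linear over GF(2). Write $x_A,x_B\in\{0,1\}^u$ for the characteristic vectors of $A$ and $B$, so that the two published objects are $Hx_A+\varphi_A$ and $Hx_B+\varphi_B$, where $H\sim\mathcal{H}$ is the common linear sketch and $\varphi_A,\varphi_B\sim\mathcal{N}_\varepsilon$ are sampled independently of $A$, $B$, and of each other. Adding the two objects over GF(2) and using linearity of $H$,
\[
(Hx_A+\varphi_A)+(Hx_B+\varphi_B)=H(x_A+x_B)+(\varphi_A+\varphi_B)=Hx_{A\triangle B}+(\varphi_A+\varphi_B)\,,
\]
since over GF(2) one has $x_A+x_B=x_{A\triangle B}$, the characteristic vector of the symmetric difference. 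So the analyst can form, purely by post-processing, a noisy KOR sketch of $x_{A\triangle B}$.

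First I would invoke the merging analysis of Section~\ref{sec:merging}, which shows that the combined noise is distributed as $\varphi_A+\varphi_B\sim\mathcal{N}_{\varepsilon'}$ with $\varepsilon'=\varepsilon^2/(2+2\varepsilon)$, and (being a function of $\varphi_A,\varphi_B$ alone) is independent of $H$ and of $x_{A\triangle B}$. Hence the displayed sum is exactly a noisy KOR sketch of $x_{A\triangle B}$, only at noise level $\varepsilon'$ in place of $\varepsilon$. Applying part~1 of Theorem~\ref{thm:main} with privacy parameter $\varepsilon'$ to this sketch --- using the estimator tuned to $p(\varepsilon')$, which the analyst can compute from $\varepsilon$ --- yields an estimate $\hat\Delta$ of $\Vert x_{A\triangle B}\circ w\Vert_1$ that with probability $1-1/u$ satisfies
\[
\left|\hat\Delta-\Vert x_{A\triangle B}\circ w\Vert_1\right|<\beta\,\Vert x_{A\triangle B}\circ w\Vert_1+O\!\left(\log(u)(\varepsilon')^{-2}\beta^{-2}\right)\,.
\]
By definition $\Vert x_{A\triangle B}\circ w\Vert_1=\Vert w_{A\triangle B}\Vert_1$, and for $\varepsilon=O(1)$ we have $(\varepsilon')^{-2}=O\!\left((1+\varepsilon)^2\varepsilon^{-4}\right)=O(\varepsilon^{-4})$, so the additive term is $O(\log(u)\varepsilon^{-4}\beta^{-2})=\mathrm{poly}(1/\varepsilon,1/\beta,\log u)$, which is the claimed bound.

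The one point that needs care is that the common draw $H\sim\mathcal{H}$ must be adequate for the estimation step at the weaker noise level $\varepsilon'$, not just at $\varepsilon$. This is unproblematic: the linear sketch of Theorem~\ref{thm:main} (KOR composed with hashing-based subsampling) does not use the noise parameter at all --- only the sketch length $\tau$ must grow as the noise parameter shrinks, and replacing $\varepsilon$ by $\varepsilon'=\Theta(\varepsilon^2)$ inflates the required $\tau$ only by a $\mathrm{poly}(1/\varepsilon)$ factor. So if each individual noisy KOR sketch is built with $\tau=O(\log^2(u)(\varepsilon')^{-2}\beta^{-2})=O(\log^2(u)\varepsilon^{-4}\beta^{-2})$, the reduction above goes through verbatim, and each individual sketch is still $\varepsilon$-differentially private by part~2 of Theorem~\ref{thm:main}. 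I expect the main obstacle to be purely bookkeeping: checking that $\varphi_A+\varphi_B$ satisfies the independence hypotheses required by Theorem~\ref{thm:main}, and tracking how the relative- and additive-error constants transform under the substitution $\varepsilon\mapsto\varepsilon'$.
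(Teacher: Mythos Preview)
Your proposal is correct and follows essentially the same route as the paper: add the two published sketches over GF(2) to obtain $Hx_{A\triangle B}+(\varphi_A+\varphi_B)$, invoke Lemma~\ref{lem:merging} to identify the combined noise as $\mathcal{N}_{\varepsilon'}$ with $\varepsilon'=\varepsilon^2/(2+2\varepsilon)$, and then apply part~1 of Theorem~\ref{thm:main} at parameter $\varepsilon'$, absorbing the resulting $O(\log(u)(\varepsilon')^{-2}\beta^{-2})=O(\log(u)\varepsilon^{-4}\beta^{-2})$ additive error into the $\mathrm{poly}(1/\varepsilon,1/\beta,\log u)$ term. Your additional remark about sizing $\tau$ for $\varepsilon'$ rather than $\varepsilon$ is a point the paper leaves implicit but is indeed needed for the reduction to go through.
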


In the special case where all weights $w_j$ are 1, this reduces to estimating the \emph{size} of the symmetric difference $A\triangle B$.\\

In Section  \ref{sec:handlingmultisets} we describe how to modify our sketch to apply in a streaming setting. In this case, we estimate the size of the union of the input streams rather than the size of the symmetric difference when merging two sketches.

\section{Related Work}
\label{sec:relatedwork}

In the absence of privacy constraints, seminal estimators for (unweighted) set cardinality that support merging sketches (to produce a sketch of the union) are HyperLogLog \cite{flajolet2007hyperloglog}, FM-sketches~\cite{FlajoletM85}, and bottom-$k$ (aka.~$k$-minimum values) sketches~\cite{bar2002counting}. 
Progress on making these estimators private for set operations include~\cite{TschorschS13} (using FM-sketches) and \cite{SparkaTS18}, which builds a private cardinality estimator to estimate set intersection size using the bottom-$k$ sketch. We note that these sketches do not achieve differential privacy, but are aimed at a weaker notion of privacy. Specifically, they offer a one-sided guarantee that may reveal that an individual element is \emph{not} present in the dataset.
To our best knowledge, a private version of HyperLogLog with provable bounds on accuracy has not been described in the literature.

\medskip

The \emph{weighted} version of cardinality estimation has been less studied.
For (scaled) integer weights in $[W]$ there is a simple reduction that inserts element $i$ with weight $w_i$ by inserting the tuples $(i,1),\dots,(i,w_i)$ into a standard cardinality estimator on the domain $U\times [W]$, but this makes the obtained bounds depend on the number $W$ of possible weights.
Cohen et al.~\cite{cohen2015unified} showed that the class of cardinality estimators that rely on extreme order statistics (for example HyperLogLog) can be efficiently extended to the weighted setting, even for real-numbered weights. 

Note that the weighted $F_0$ estimation problem is different from $F_1$ and $L_1$ estimation in the context of set operations, for example, the union of two identical sets will have the same weighted $F_0$, whereas summing two identical vectors will produce a vector with twice the $L_1$ norm.
In the rest of this section we focus on the standard, unweighted setting.

\subsection{Differentially private cardinality estimators}
Already the seminal paper on pan-privacy~\cite{dwork2010pan} discusses differentially private streaming algorithms for $F_0$ on insertion-only streams.
Their sketch is not linear and does not allow deletions or subtraction of sketches. It is not clear if the sketch can be merged to produce a sketch for the union.
Recent work by von Voigt et al.~\cite{VoigtT19} has shown how to estimate the cardinality of a set using less space in a differentially private manner using FM-sketches, using the Probabilistic Counting with Stochastic Averaging (PCSA) technique~\cite{FlajoletM85}. These sketches can be merged to obtain a sketch for the union of the input set with a slightly higher level of noise. Privacy is achieved by randomly adding ones to the sketch and by only sketching a sample of the input dataset. 

\emph{Bloom Filters} have been studied extensively to obtain cardinality estimators under set operations (already implicit in~\cite{dwork2010pan}).
Alaggan et al.~\cite{AlagganGMT15} estimated set intersection size by combining a technique for computing similarity between sets, represented by Bloom filters in a differentially private manner, named BLIP (BLoom-then-flIP) filters~\cite{AlagganGK12} with a technique for approximating set intersection of two sets based on their Bloom Filter representation~\cite{BroderM03}. We note that \cite{AlagganGK12} achieves privacy by flipping each bit of the Bloom filter with a certain probability, much like the technique we use to get privacy of our sketch. 
Stanojevic et al.~\cite{StanojevicNY17} show how to estimate set intersection, union and symmetric difference for two sets by computing an estimate for the size of the union, and combined with the size of each set, they show how to compute an estimate for the size of the intersection and the symmetric difference. They achieve privacy by flipping each bit with some probability, like in \cite{AlagganGK12}. 
Also, RAPPOR \cite{ErlingssonPK14} uses Bloom Filters with a Randomized Response technique to collect data from users in a differentially private way but is mainly aimed at computing heavy hitters. 

Though a bound on the expected worst-case error of privately estimating the size of a symmetric difference $|A\triangle B|$ (as in Corollary~\ref{cor:symmetric-difference}) is not stated in any of these papers, an upper bound of $O(\sqrt{\bar{m}})$, where $\bar{m}$ is an upper bound on the size of the sets, follows from the discussion in~\cite{StanojevicNY17} (for fixed $\varepsilon$).
It seems that this magnitude of error is inherent to approaches using Bloom filters since it arises by balancing the error related to the noise and the error related to hash collisions in the Bloom filter.
An advantage and special case of our noisy KOR sketch is that it can be used to directly estimate the size of the symmetric difference, and so the error will depend only on the size of the symmetric difference. It seems that with non-linear sketches it would be necessary to first estimate the size of the union and combine this with the size of each input set  as exhibited in, for example,~\cite{StanojevicNY17}. Hence, the error would depend on the size of the union of the input sets.


\subsection{Differentially private sketches}
Closely related to our work is the differentially private Johnson-Lindenstrauss (JL) sketch by Kenthapadi et al.~\cite{KenthapadiKMM13}, in which the technique of adding noise to the sketch is also applied. 
Kenthapadi et al.~add Gaussian noise, so to store and maintain a sketched vector, some kind of discretization would be needed (not discussed in their paper).
Discretizing a real-valued private mechanism is non-trivial:
Without sufficient care, one might lose privacy due to rounding in an implementation, as argued by Mironov~\cite{Mironov12}. 
Even if a suitable discretization of the mechanism in~\cite{KenthapadiKMM13} would be possible (see~\cite{canonne2020discrete} for a general discussion), it has several drawbacks compared to our method:
\begin{itemize}
\item It only achieves \emph{approximate} differential privacy as opposed to the pure differential privacy of the noisy KOR sketch.
\item The time needed to update the sketch when a set element is inserted or removed is not constant (in the main method described it is linear in the sketch size).
\item The time needed to initialize the sketch is linear in the size of the sketch matrix, which has $u$ columns, because the noise needs to be calibrated to the sensitivity of the JL sketch matrix, which requires linear time in the size of the sketch matrix. Alternatively, which is the suggestion in Kenthapadi et al., the sketch matrix is assumed to have low sensitivity and noise is calibrated to this sensitivity. If a sketch matrix with a large entry is randomly chosen, the sensitivity of the sketch matrix is large, in which case the noise does not ensure privacy. So with a small probability, privacy is not preserved.
\end{itemize}

Another closely related work is the paper of Mir et al.~\cite{mir2011pan}, which also adds a noise vector after computing standard linear sketches for $F_0$ estimation to make the sketch differentially private. They further initialize their sketches with random noise vectors to also get pan-privacy. 
The error bound obtained is similar to ours, and the sketch has a discrete representation, but their method is inferior in terms of time complexity.
This is because they rely on the \emph{exponential mechanism}~\cite{mcsherry2007mechanism}, which is not computationally efficient.
(Note that a preprint of the paper of Mir et al.~\cite{mir2010panArXiv} presented a computationally more efficient method. However, the sensitivity analysis in that paper has an error~\cite{NikolovPersonal2020} that was corrected in the slower method published in~\cite{mir2011pan}.)

Our method is more computationally efficient and arguably simpler than the methods of~\cite{KenthapadiKMM13, mir2011pan}. Our linear sketch is not a replacement for these sketches, though, since our sketch is over GF(2) rather than the reals (or integers).

\subsection{Lower bounds.}\label{sec:lower}
Jayram and Woodruff~\cite{JayramW13} show that, even with no privacy guarantee, to obtain error probability $1/u$ we need a sketch of  $\Omega\left(\log(u)\beta^{-2}\right)$ bits to estimate $F_0$ with relative error $1\pm\beta$.
It is easy to extend this lower bound to our setting, in which an additive error of $c$ is allowed:
Simply insert each item $c$ times, to increase the size of the set so that the additive error is negligible.
Formally this requires us to extend the universe to $U\times \{1,\dots,c\}$, such that the lower bound in terms of the original universe size becomes $\Omega\left(\log(u/c)\beta^{-2}\right)$.
(The reason why we do not use this reduction to eliminate the additive error in our upper bound is that the reduction increases the sensitivity of updates, destroying the differential privacy properties.)

 Hardt and Talwar~\cite{HardtT10} show that an $\varepsilon$-differentially private sketch for $F_0$ must have additive error $\Omega(1/\varepsilon)$, which is comparable (up to polynomial and logarithmic factors) to the additive error we achieve.
%

Desfontaines et al.~\cite{DesfontainesLB19} show that it is not possible to preserve privacy in accurate cardinality estimators 
if we can merge several sketches without loss in accuracy. Our sketch will have an increase in noise when merging sketches, and thus does not satisfy the requirement for cardinality estimators formulated in~\cite{DesfontainesLB19}.

McGregor et al.~\cite{mcgregor2010limits} showed that in order to estimate the size of the intersection of two sets $A$ and $B$, based on differentially private sketches of $A$ and $B$, an additive error of $\Omega(\sqrt{u}/e^\varepsilon)$ is needed in the worst case when $A$ and $B$ are arbitrary subsets of $[u]$.
The lower bound holds even in an interactive setting where Alice (holding $A$) and Bob (holding $B$) can communicate, and we require that the communication transcript is differentially private.
The hard input distribution uses sets with symmetric difference of size $\Theta(u)$ with high probability.
Since $|A\cap B| = (|A| + |B| - |A \triangle B|) / 2$, estimating the intersection size is no more difficult (up to constant factors in error) than estimating $|A|$, $|B|$, and $|A \triangle B|$.
We can estimate $|A|$ and $|B|$ with error $O(1/\varepsilon)$ under differential privacy, so it follows that estimating $|A \triangle B|$ under differential privacy requires error $\Omega(\sqrt{u}/e^\varepsilon)$.
For a contrasting upper bound, \cite{Vadhan17, MironovPRV09} suggest an algorithm estimating two-party set intersection size up to an additive error of $O(\sqrt{u}/\varepsilon)$ with high probability.
A lower bound in terms of the size $m$ of the symmetric difference follows by setting $u = m$.


\subsection{Noisy sketching.}
In addition to the paper of Mir et al.~\cite{mir2011pan}, there is some previous work on sketching techniques in the presence of noise.
Motivated by applications in learning theory, Awasthi et al.~\cite{awasthi2016learning} considered recovery of a vector based on noisy 1-bit linear measurements.
The resistance to noise demonstrated is analogous to what we show for the KOR sketch, but technically quite different since the linear mapping is computed over the reals before a sign operation is applied.

In a very recent paper \cite{ChoiDKY20}, Choi et al.~propose a framework for releasing differentially private estimates of various sketching problems in a distributed setting. This framework ensures that the estimates only have a multiplicative error factor. The technique relies on secure multi-party computation and the sketches submitted by each participant are not private and so cannot be released. Further, the results of Choi et al.~do not immediately allow for estimating size or weight of the symmetric difference between two sets.

If the sketching matrix $H$ itself is secret and randomly chosen from a distribution over matrices with entries in a finite field, very strong privacy guarantees on the sketch $Hx$ can be obtained, while still allowing $\Vert x\Vert_0$ to be estimated from $Hx$ with small error~\cite{bioglio2014secure}. Blocki et al.~\cite{BlockiBDS12} prove that the Johnson-Lindenstrauss transform is in fact differentially private, when keeping the sketch matrix secret.
However, the condition that the sketch matrix is secret is a serious limitation for applications such as streaming and distributed cardinality estimation that require $H$ to be stored or shared.

%

\section{Preliminaries}
We let $[n]=\{1,2,\dots,n\}$ and let $U=[u]$ be the universe that the datasets are taken from. 

For a set $A \subseteq U$, we let $x_A$ denote the characteristic vector for $A$, defined as 
\[
(x_A)_j=\begin{cases}1,\qquad j\in A\\
0,\qquad \text{otherwise} \enspace .
\end{cases}
\]
We write $w_A$ (or $w_{x_A}$) for the weight vector for input set $A$ such that
\[
w_A=x_A\circ w
\]
for fixed, public weights $w_j\in(0,1]$, and $\circ$ denotes the Hadamard product.

For vector $x=(x_1,...,x_u)$ we define $\Vert x\Vert_p=\left(\sum_{j=1}^ux_j^p\right)^{1/p}$ as the $p$-norm of $x$. For $p=0$, we define $\Vert x\Vert_0~=~\sum_{j=1}^u\mathbf{1}[x_j~\neq~0]$, often called the zero-''norm''.
$F_0$ denotes the $0$\textsuperscript{th} frequency moment and represents the number of distinct elements in a stream (or a set). Frequency moments are well-known from the streaming literature, see for example \cite{AlonMS96}. 

Our sketch $Hx_A$ is comprised of $\log(u)$ ''levels'', $H_ix_A$ for\\ $i~=~0,...,\log(u)-1$. We refer to Section \ref{sec:sketchdescription} for a description of these levels. Let $n$ denote the size of the binary vector representation of $H_ix_A$ for each $i$. Hence, the size of the noisy KOR sketch $Hx_A+\varphi$ is $\tau=n\log u$. Note that $n$ is fixed and depends on the privacy parameter $\varepsilon$ and the accuracy parameter $\beta$. 

Finally, we assume that sets and vectors are stored in a sparse representation, such that we can list the non-zero entries in the input vector $x$ in time $O(\Vert x\Vert_0)$.

\subsection{Hashing-based subsampling}

The sketch matrix $H$ is defined by several hash functions.
For simplicity, we assume access to an oracle representing random hash functions, namely, that we can sample a fully random hash function, and it can be evaluated in constant time.
We do not store the hash function as part of our sketch, so the space for our sketch does not include space required for storing the hash function.
We believe it is possible to replace these hash functions with concrete, efficient hash functions that can be stored in small space while preserving the asymptotic bounds on accuracy, but in order to focus on privacy aspects, we have not pursued this direction.
Importantly, the differential privacy of our method holds for any choice of hash function and does not depend on the random oracle assumption.

To ensure that adding two sketches gives a sketch for the symmetric difference, it is necessary that both players sample the same elements for each $H_i$. To ensure coordinated sampling, we use a hash function, so the same elements from $U$ are sampled by both players. 
We use the following (standard) subsampling technique: 
let $\mathcal{S}$ be the family of all fully random hash functions from $U$ into $[0,1]$. Let $s\sim\mathcal{S}$ uniformly at random. We sample an element $j$ from the input set at level $i=0,...,\log(u)-1$ if and only if $s(j)\in\left(w_j/2^{i+1}, w_j/2^i\right]$.
%
We refer the reader to the survey of Woodruff~\cite{Woodruff14} for more details on subsampling.

\subsection{Differential Privacy}
\label{sec:DPDefinition}
Differential privacy is a statistical property of the behavior of a mechanism \cite{DworkMNS06}. The guarantee is that an adversary who observes the output of a differentially private mechanism will only obtain negligible information about the presence or absence of a particular item in the input data. 
Intuitively, a differentially private mechanism is almost insensitive to the presence or absence of a single element, in the sense that the probability of observing a specific result should be almost the same for any two neighboring sets.

In Definition \ref{def:DP}, we define differential privacy formally in terms of databases. In our application, the databases are sets, and thus \emph{neighboring} means that one set is a subset of the other, and their sizes differ by~1.
\begin{definition}[Differential Privacy \cite{DworkMNS06}]
\label{def:DP}
For $\varepsilon\ge 0$, a randomized mechanism $\mathcal{M}$ is said to be $\varepsilon$-differentially private (or purely differentially private) if for any two neighboring databases, $S$ and $T$ -- i.e., databases differing in a single entry -- and for all $W\subseteq \text{Range}(\mathcal{M})$ it holds that
\[
     \Pr\Big[\mathcal{M}(S)\in W\Big]\le e^{\varepsilon}\cdot\Pr\Big[\mathcal{M}(T)\in W\Big].
\]
For $\varepsilon\ge 0$ and $\delta\in[0,1]$, a randomized mechanism $\mathcal{M}$ is said to be $(\varepsilon,\delta)$-differentially private (or approximately differentially private) if for any two neighboring databases, $S$ and $T$, and for all $W\subseteq \text{Range}(\mathcal{M})$ it holds that
\[
     \Pr\Big[\mathcal{M}(S)\in W\Big]\le e^{\varepsilon}\cdot\Pr\Big[\mathcal{M}(T)\in W\Big]+\delta.
\]
\end{definition}
We show in section \ref{sec:DPGuarantees} that our protocol obtains $\varepsilon$-differential privacy.

Our protocol for estimating the weight of the symmetric difference works in the \emph{local} model of differential privacy, where each player adds noise to their own sketch.
It uses the general technique of achieving privacy by adding noise according to \emph{sensitivity} of a function~\cite{DworkMNS06}.
We note that our sketch would \emph{also} work in a model where vectors supplied by the users are combined using a black-box multi-party \emph{secure aggregation}~\cite{GoryczkaXS13, MelisDC16}.
In this setting, only the sketch for the symmetric difference would be released, and thus, only this sketch would need to be differentially private, meaning that less noise is required.

We can use the Laplace mechanism \cite{DworkMNS06} to get differentially private estimates of the weights of the input sets. These estimates can be used together with an estimate for the weight of the symmetric difference to compute estimates for the union and the intersection of the two input sets with error that is of the same magnitude as the error for estimating the symmetric difference.
For more details about differential privacy, we refer the reader to, for example, \cite{DworkR14}.





\section{Techniques}
\label{sec:techniques}

\subsection{Sketch Description}
\label{sec:sketchdescription}
In this section, we describe the noisy KOR sketch in detail. The description is self-contained, but we refer the interested reader to \cite{CormodeGHJ12} for more background on (linear) sketches. 
As mentioned, our sketch combines the techniques from \cite{KushilevitzOR98} with hashing-based subsampling to achieve a sketch that is robust against adding noise, as long as we know how much noise was added. 

We first give the intuition behind the $n\times u$-matrices $H_i$, that our sketch $H$ is comprised of:
Suppose that we have a rough estimate $\hat{E}$ of $\Vert w\Vert_1$, accurate within a constant factor.
Then we can obtain a more precise estimate by sampling (using a hash function) a fraction $n/\hat{E}$ of the elements, for some parameter $n$, and computing the sketch from~\cite{KushilevitzOR98} of size~$n$ for the sampled elements.
This gives an approximation of the number of sampled elements, which in turn gives an approximation of $\Vert w\Vert_1$ with small relative error.
Since we do not know $\Vert w\Vert_1$ within a constant factor -- especially in the setting where we are interested in the size of the symmetric difference -- we use hashing-based subsampling to sample each element $j$ from the input set with probability $w_j/2^{i+1}$ for $i = 0,\dots,\log(u)-1$. Thus for each $i$, we sample elements corresponding to approximately a $1/2^{i+1}$ fraction of the weight and compute the sketch from \cite{KushilevitzOR98} of size $n$ for the sampled elements. For one of these $i$ we are guaranteed to sample approximately a fraction $n/\Vert w\Vert_1$ of the input weight assuming that $\Vert w\Vert_1 > n$. For this $i$, we can obtain a precise estimate of $\Vert w\Vert_1$ from the sketch.

We now define $H_i$ formally.
We first describe the sketch from~\cite{KushilevitzOR98} as a linear sketch over GF(2).
Let $\mathcal{F}$ be the family of all hash functions from universe $U$ into $[n]$, and pick $h\sim \mathcal{F}$ uniformly at random. 
The hash function $h$ uniquely defines an $n\times u$-matrix $K$, where
\[
K_{k,j}=\begin{cases}
1,\qquad \text{if $h(j)=k$}\\
0,\qquad \text{otherwise} \enspace .
\end{cases}
\] We combine this with the following sampling technique:

Let $\mathcal{S}$ be the family of all hash functions from $U$ to $[0,1]$. Sample $s\sim \mathcal{S}$ uniformly at random. The hash function $s$ defines a $u\times u$-diagonal matrix $S_i$ for each $i=0,...,\log(u)-1$, defined by
\[
(S_i)_{j,j}=\begin{cases}
1,\qquad \text{if $s(j)\in\left(w_j/2^{i+1},w_j/2^i\right]$}\\
0,\qquad \text{otherwise} \enspace .
\end{cases}
\]
The matrix-vector product $S_ix$ represents subsample of input vector $x$, where we sample each element with probability $w_j/2^{i+1}$.


We are finally ready to define $H_i$ as $H_i=K S_i$, which is an $n\times u$-matrix over GF(2).
By definition:
\[
(H_i)_{k,j}=\begin{cases}
1,\qquad (h(j)=k)\land (s(j)\in\left(w_j/2^{i+1},w_j/2^i\right])\\
0,\qquad \text{otherwise} \enspace .
\end{cases}
\]
The KOR sketch can be represented as an $n\log(u)\times u$-matrix $H$, formed by stacking $H_1,...,H_{\log(u)}$.

Let $\mathcal{N}_\varepsilon$ be a distribution over vectors from $\{0,1\}^{n\log(u)}$, where each entry is 1 independently with probability $p$. We show in Section \ref{sec:estimateAnalysis} that it suffices to set $p =1/(2+\varepsilon)$.
Sample the noise (or \emph{pertubation}) vector $\varphi\sim \mathcal{N}_\varepsilon$ independently and uniformly at random.
The \emph{noisy} KOR sketch of $x$ is then computed (over GF(2)) as:
\[
Hx+\varphi.
\]

\subsection{Estimation}
\label{sec:protocol}
Next, we describe how to compute a weight estimate from a sketch $Hx+\varphi$. Let $w$ be the weight vector associated with $x$. Let $\varphi_i$ be the restriction of $\varphi$ to the entries that are added to $H_ix$ when adding $\varphi$ to $Hx$.
To compute an estimate for $\Vert w\Vert_1$, for each $i=0,...,\log(u)-1$ count the number of 1s in $H_i x+\varphi_i$, $Z_i=\Vert H_ix+\varphi_i\Vert_0$ and compute the interval:
\begin{align}
\label{tech:interval}
I_i=\begin{cases}
[0,u]\qquad\qquad\qquad\qquad\qquad\qquad\qquad\qquad\quad\ \ \text{if $Z_i\ge(1-\gamma)n/2$}\\
\left[2^in\ln\left(\frac{\frac{1}{2/\varepsilon+1}}{1-\frac{2Z_i}{(1+\gamma)n}}\right), 2^in\ln\left(\frac{\frac{1}{2/\varepsilon+1}}{1-\frac{2Z_i}{(1-\gamma)n}}\right)\right]\qquad\ \text{otherwise.}
\end{cases}
\end{align}
where
$\gamma<\frac{\beta-1/n}{7e^3(2/\varepsilon+1)}.$
Compute the intersection $I=\bigcap_{i=0}^{\log(u)-1}I_i$ and check if the maximum value in $I$ is within a factor $(1+\eta)$ of the minimum value in $I$ for 
\[
\eta = \frac{6\gamma\left(e^3\left(\frac{2}{\varepsilon}+1\right)-1\right)}{1+\gamma-2\gamma\left(e^3\left(\frac{2}{\varepsilon}+1\right)\right)}
\enspace .
\]
If that is the case, every element in $I$ is a good estimate for $\Vert w\Vert_1$ (having relative error at most $(1+\beta)$) with high probability. 
Otherwise, $\Vert w\Vert_1$ is small with high probability, and we let the estimate for $\Vert w\Vert_1$ be 0.
We analyze the accuracy of this estimator in Section~\ref{sec:proof}.

\subsection{Application to symmetric difference}
\label{sec:merging}
In this section, we describe a differentially private protocol to compute an estimate for the weight of the symmetric difference between sets held by two parties.
First, we show that the sum of two noisy KOR sketches, $Hx_A+\varphi$ and $Hx_B+\psi$, is a noisy KOR sketch for the symmetric difference, $H(x_{A\triangle B})+(\varphi+\psi)$, which has the same properties as $Hx_A+\varphi$ and $Hx_B+\psi$, but for $\varepsilon'<\varepsilon$ as more noise is added.

\begin{lemma}
\label{lem:merging}
Adding two noisy KOR sketches with perturbation vectors $\varphi\sim \mathcal{N}_\varepsilon$ and $\psi\sim \mathcal{N}_\varepsilon$, respectively, will yield a noisy KOR sketch for the symmetric difference of the input sets with noise $\varphi+ \psi\sim \mathcal{N}_{\varepsilon'}$ for  $\varepsilon'=\varepsilon^2/(2+2\varepsilon)$.
\end{lemma}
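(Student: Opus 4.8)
The plan is to reduce the statement to a one-line algebraic identity after exploiting GF(2)-linearity. First I would use that $H$ is a fixed matrix that is linear over GF(2) and that both players use the \emph{same} $H$: then $Hx_A + Hx_B = H(x_A + x_B)$ with all additions over GF(2), and over GF(2) the characteristic vectors satisfy $x_A + x_B = x_{A\triangle B}$, so $(Hx_A + \varphi) + (Hx_B + \psi) = H(x_{A\triangle B}) + (\varphi + \psi)$. In particular the merged object has exactly the syntactic form of a noisy KOR sketch of $A\triangle B$ — the same level structure $H_0,\dots,H_{\log u-1}$ and the same per-level partitioning of the noise coordinates described in Section~\ref{sec:sketchdescription} — so the estimator of Section~\ref{sec:protocol} applies to it verbatim, and all that remains is to identify the distribution of the perturbation vector $\varphi+\psi$.

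Second I would analyze $\varphi+\psi$ coordinatewise. By construction $\varphi\sim\mathcal{N}_\varepsilon$ and $\psi\sim\mathcal{N}_\varepsilon$ are independent, and within each the $n\log u$ coordinates are mutually independent with each coordinate equal to $1$ with probability $p = 1/(2+\varepsilon)$. Hence the coordinates of $\varphi+\psi$ are again mutually independent, and coordinate $j$ — the XOR of two independent Bernoulli($p$) bits — equals $1$ with probability $p' = 2p(1-p)$.

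Third I would carry out the arithmetic: with $p = 1/(2+\varepsilon)$ we get $p' = 2\cdot\frac{1}{2+\varepsilon}\cdot\frac{1+\varepsilon}{2+\varepsilon} = \frac{2(1+\varepsilon)}{(2+\varepsilon)^2}$, and it suffices to check $p' = 1/(2+\varepsilon')$ for $\varepsilon' = \varepsilon^2/(2+2\varepsilon)$. Indeed $2+\varepsilon' = \frac{2(2+2\varepsilon)+\varepsilon^2}{2+2\varepsilon} = \frac{(2+\varepsilon)^2}{2(1+\varepsilon)}$, whose reciprocal is exactly $p'$. Thus $\varphi+\psi\sim\mathcal{N}_{\varepsilon'}$, and since $\varepsilon>0$ we also have $0<\varepsilon'<\varepsilon$ (equivalently $p'<1/2$), so the merged sketch is a genuine noisy KOR sketch with privacy parameter $\varepsilon'$, completing the argument.

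I do not expect a real obstacle here: there is no analytic difficulty, and the work is essentially bookkeeping. The two points to be careful about are (i) that "the same $H$" is what makes the merged sketch inherit the level-by-level structure required by the estimator, and (ii) that $\mathcal{N}_\varepsilon$ is pinned down by $p = 1/(2+\varepsilon)$ (so the claim is the exact identity above rather than a looser bound); a secondary remark worth including is that $\varepsilon' = \varepsilon^2/(2+2\varepsilon)$ is strictly smaller than $\varepsilon$, matching the intuition that merging only adds noise.
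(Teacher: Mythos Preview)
Your proposal is correct and follows essentially the same approach as the paper: GF(2)-linearity gives $H(x_A+x_B)+(\varphi+\psi)$, the XOR of two independent Bernoulli($p$) bits is Bernoulli($2p(1-p)$), and solving $1/(2+\varepsilon')=2p(1-p)$ with $p=1/(2+\varepsilon)$ yields $\varepsilon'=\varepsilon^2/(2+2\varepsilon)$. Your write-up is a bit more explicit in spelling out $x_A+x_B=x_{A\triangle B}$, the coordinatewise independence, and the arithmetic verification, but the argument is the same.
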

\begin{proof}
Let $x_A$ and $x_B$ be the input vectors from each of the two players. Let $H$ be as defined in Section~\ref{sec:sketchdescription}, and define $\varphi, \psi$ as the noise vectors for the noisy KOR sketches for $x_A$ and $x_B$, respectively. 
We have (over GF(2)) that
\begin{align*}
\left(Hx_A+\varphi\right)+ \left(Hx_B+\psi\right)&=\left(Hx_A+Hx_B\right)+ \left(\varphi+\psi\right)\\&=H(x_A+x_B)+ \left(\varphi+\psi\right).
\end{align*}
This is exactly the noisy KOR sketch for the symmetric difference with perturbation $\varphi+ \psi$. Note that we observe a 1 in an entry of $\varphi+ \psi$ with probability $p'=p(1-p)+(1-p)p=2p(1-p)$.
We show in Section~\ref{sec:estimateAnalysis} that we can let $p=\frac{1}{2+\varepsilon}$. Observe that
\[
p'=\frac{1}{2+\varepsilon'}=\frac{2}{2+\varepsilon}\left(1-\frac{1}{2+\varepsilon}\right)
\]
which implies that $\varepsilon'=\varepsilon^2/(2+2\varepsilon)$.
\end{proof}

By Lemma~\ref{lem:merging} we can treat a sketch for the symmetric difference exactly like a sketch for input vector $x$ although with a different privacy parameter $\varepsilon'$. Hence, Theorem~\ref{thm:main} gives us Corollary \ref{cor:symmetric-difference}, restated here for convenience:

\application*

Note that the additive error in Corollary \ref{cor:symmetric-difference} still depends polynomially on $\varepsilon$ even for privacy parameter $\varepsilon'$, which is explained by the fact that $\varepsilon'=\varepsilon^2/(2+2\varepsilon)$.



Finally, we assumed that $\Vert w_A\Vert_1$ and $\Vert w_B\Vert_1$ were released with Laplacian noise, which gives an expected additive error of $O(1/\varepsilon)$ for each of $\Vert w_A\Vert_1$ and $\Vert w_B\Vert_1$ \cite{DworkMNS06}. We can use the following equations to get estimates for the union, intersection and difference:
\begin{align*}
&\Vert w_{A\cup B}\Vert_1 =\frac{\Vert w_A\Vert_1 +\Vert w_B\Vert_1 + \Vert w_{A\triangle B}\Vert_1}{2},\\
&\Vert w_{A\cap B}\Vert_1 = \frac{\Vert w_A\Vert_1 +\Vert w_B\Vert_1 - \Vert w_{A\triangle B}\Vert_1}{2}\\ &\Vert w_{A\backslash B}\Vert_1=\frac{\Vert w_A\Vert_1+\Vert w_{A\triangle B}\Vert_1-\Vert w_B\Vert_1}{2} \enspace.
\end{align*}
\noindent
That is, the error is bounded by half the error of the estimate of the symmetric difference size plus $O(1/\varepsilon)$.
%

\section{Proof of Theorem~\ref{thm:main}}\label{sec:proof}

In this section we give a proof of Theorem~\ref{thm:main}, restated here for convenience:

\main*

\subsection{Noise level and Differential Privacy Guarantees}
\label{sec:DPGuarantees}
We first show that the noisy KOR sketch $H x+\varphi$ satisfies $\varepsilon$-differential privacy, which proves part 2 of Theorem~\ref{thm:main}. Intuitively, removal/insertion of a single element can change only a single entry in the sketch, as the element is inserted into only a single level.


\begin{restatable}{lemma}{DPguarantee}
\label{lem:dpguarantee}
    If $p\in \left(\frac{1}{e^{\varepsilon}+1},\tfrac{1}{2}\right)$ then $H x+\varphi$ is $\varepsilon$-differentially private.
\end{restatable}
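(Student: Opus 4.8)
The plan is to reduce the claim to the standard privacy guarantee of randomized response applied to a single bit, using the key structural observation that two neighboring inputs produce sketches that differ in \emph{at most one coordinate}. Concretely, let $S$ and $T$ be neighboring sets, say $T = S \cup \{j\}$, with characteristic vectors $x_S$ and $x_T = x_S + x_{\{j\}}$ (over GF(2)). Since $j$ is sampled into at most one level $i$ (namely the unique $i$ with $s(j) \in (w_j/2^{i+1}, w_j/2^i]$, if any), and within that level it hashes to exactly one row $k = h(j)$, the vectors $Hx_S$ and $Hx_T$ are identical except possibly in the single coordinate indexed by $(i,k)$. Thus $Hx_T = Hx_S + e$ where $e$ is either $0$ or a standard basis vector.

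The first step is to handle the trivial case $e = 0$, where $\mathcal{M}(S)$ and $\mathcal{M}(T)$ are identically distributed and the inequality holds for any $\varepsilon \ge 0$. The main step is the case where $e$ has a single $1$ in position $(i,k)$. Here I would write, for any fixed output vector $y \in \{0,1\}^\tau$, the likelihood ratio
\[
\frac{\Pr[Hx_S + \varphi = y]}{\Pr[Hx_T + \varphi = y]} = \frac{\Pr[\varphi = y - Hx_S]}{\Pr[\varphi = y - Hx_T]} = \frac{\Pr[\varphi = y - Hx_S]}{\Pr[\varphi = (y - Hx_S) + e]} \enspace .
\]
Because the entries of $\varphi$ are independent and $e$ differs from $0$ in exactly one coordinate, all factors except the one at position $(i,k)$ cancel, leaving a ratio of the form $\Pr[(\varphi)_{(i,k)} = b]/\Pr[(\varphi)_{(i,k)} = 1-b]$ for some bit $b$, which is either $p/(1-p)$ or $(1-p)/p$. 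Since $p < 1/2$, the larger of these is $(1-p)/p$, so the ratio is at most $(1-p)/p$. The hypothesis $p > 1/(e^\varepsilon + 1)$ is exactly equivalent to $(1-p)/p < e^\varepsilon$, giving the pointwise bound $\Pr[\mathcal{M}(S) = y] \le e^\varepsilon \Pr[\mathcal{M}(T) = y]$; summing over $y \in W$ yields the definition of $\varepsilon$-differential privacy. Note the argument is symmetric in $S$ and $T$, so it also covers $T = S \setminus \{j\}$.

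The only subtle point — and the part deserving care rather than the part being hard — is arguing cleanly that neighboring inputs affect only one coordinate of $Hx$. This follows because for a fixed hash function $s$, an element $j$ lies in at most one of the disjoint intervals $(w_j/2^{i+1}, w_j/2^i]$ for $i = 0,\dots,\log(u)-1$ (they are disjoint and their union is $(w_j/u, w_j] \subseteq (0,1]$), so $S_i x_{\{j\}} \ne 0$ for at most one $i$; and for that $i$, $K S_i x_{\{j\}} = K e_j$ is the single column $h(j)$ of $K$, which is a standard basis vector $e_{h(j)}$ in $\{0,1\}^n$. Crucially, as the excerpt emphasizes, this holds for \emph{every} fixed choice of the hash functions $h,s$, so privacy does not depend on the random-oracle assumption: we condition on $H$ throughout and the only randomness used is that of $\varphi$. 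I would also remark that, since the ratio bound holds for every output atom $y$, it extends to all measurable $W$ and in particular gives pure (not approximate) differential privacy.
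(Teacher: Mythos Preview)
Your proposal is correct and follows essentially the same approach as the paper: both arguments reduce to randomized response on a single bit by observing that neighboring inputs affect at most one coordinate of $Hx$ (since the sampling intervals $(w_j/2^{i+1},w_j/2^i]$ are disjoint and $h$ maps $j$ to one row), and then bound the likelihood ratio at that coordinate by $(1-p)/p < e^\varepsilon$. Your write-up is in fact slightly more explicit than the paper's about why only one coordinate can change and about the cancellation in the likelihood ratio, but the underlying argument is the same.
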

\begin{proof}[Proof Sketch]
The proof follows from the privacy of the Randomized Response Technique~\cite{RandomizedResponse}. To make this work self-contained, we included a full proof in Appendix \ref{app:privacy}.
\end{proof}

\subsection{Bounding accuracy}
\label{sec:estimateAnalysis}

In this section, along with Section~\ref{sec:sumup}, we prove the first part of Theorem~\ref{thm:main}.
Let an input vector $x$ be given and define $w$ to be the corresponding weight vector. We will mainly consider each $H_ix$ isolated, so let $\varphi_i$ be the $n$-dimensional (binary) randomness vector as described in the proof of Lemma \ref{lem:dpguarantee}.
First, we state two useful lemmas.

\begin{restatable}{lemma}{expectations}
\label{lem:expectations}
For each $i=0,...,\log(u)-1$ let  $L_i=\Vert H_i x\Vert_0$ and $Z_i=\Vert H_i x + \varphi_i\Vert_0$. Then:
\begin{align}
\label{exp:numoneslinearsketch}
    \mathop{\operatorname{E}}_{\substack{h\sim\mathcal{F},\\s\sim\mathcal{S}}}[L_i]=\frac{n}{2}\left(1-\prod_{j\in A}\left(1-\frac{w_j}{2^in}\right)\right)
\end{align}
\begin{align}
\label{exp:numonesnoisysketch}
    \mathop{\operatorname{E}}_{\substack{h\sim\mathcal{F},\\s\sim\mathcal{S},\\\varphi_i\sim\mathcal{N}_p}}[Z_i]=\frac{n}{2}\left(1-\left(1-2p\right)\prod_{j\in A}\left(1-\frac{w_j}{2^in}\right)\right)
\end{align}
\end{restatable}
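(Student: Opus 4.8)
Both identities will follow from linearity of expectation over the $n$ coordinates, the symmetry of the coordinates under a fully random hash function, and the elementary formula for the parity of a sum of independent bits. Write $A=\{j : x_j=1\}$. For $L_i=\Vert H_ix\Vert_0$ we have $\operatorname{E}[L_i]=\sum_{k=1}^n \Pr[(H_ix)_k=1]$, and since $h\sim\mathcal{F}$ is uniform the rows of $K$ (hence the coordinates of $H_ix=KS_ix$) play symmetric roles, so $\operatorname{E}[L_i]=n\,\Pr[(H_ix)_1=1]$. Now over GF(2), $(H_ix)_1=\sum_{j\in A} B_j$ where $B_j=\mathbf{1}\big[h(j)=1 \wedge s(j)\in(w_j/2^{i+1},w_j/2^i]\big]$.

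The key structural observation is that $\{B_j\}_{j\in A}$ are \emph{mutually independent} Bernoulli variables: the events $h(j)=1$ are independent across $j$ (full randomness of $h$), the sampling events $s(j)\in(w_j/2^{i+1},w_j/2^i]$ are independent across $j$ (full randomness of $s$), and $h$ and $s$ are independent. Since $\Pr[h(j)=1]=1/n$ and $\Pr[s(j)\in(w_j/2^{i+1},w_j/2^i]]=w_j/2^i-w_j/2^{i+1}=w_j/2^{i+1}$, we get $\Pr[B_j=1]=\tfrac{w_j}{2^{i+1}n}$, hence $1-2\Pr[B_j=1]=1-\tfrac{w_j}{2^i n}$. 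Then I would invoke the standard identity $\Pr\big[\sum_{j\in A}B_j \text{ odd}\big]=\tfrac12\big(1-\prod_{j\in A}(1-2\Pr[B_j=1])\big)$, which follows from $\operatorname{E}\big[(-1)^{\sum_{j\in A}B_j}\big]=\prod_{j\in A}\operatorname{E}[(-1)^{B_j}]=\prod_{j\in A}(1-2\Pr[B_j=1])$ together with $\Pr[\cdot \text{ odd}]=\tfrac12(1-\operatorname{E}[(-1)^{\sum B_j}])$. Substituting gives $\Pr[(H_ix)_1=1]=\tfrac12\big(1-\prod_{j\in A}(1-\tfrac{w_j}{2^in})\big)$, which multiplied by $n$ is exactly \eqref{exp:numoneslinearsketch}.

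For $Z_i=\Vert H_ix+\varphi_i\Vert_0$, I would repeat the argument with one extra independent factor. Over GF(2), $(H_ix+\varphi_i)_k=(H_ix)_k+(\varphi_i)_k$ with $(\varphi_i)_k\sim\mathrm{Bernoulli}(p)$ independent of everything, so $\operatorname{E}[(-1)^{(H_ix+\varphi_i)_k}]=\operatorname{E}[(-1)^{(H_ix)_k}]\cdot\operatorname{E}[(-1)^{(\varphi_i)_k}]=(1-2p)\prod_{j\in A}(1-\tfrac{w_j}{2^in})$, whence $\Pr[(H_ix+\varphi_i)_k=1]=\tfrac12\big(1-(1-2p)\prod_{j\in A}(1-\tfrac{w_j}{2^in})\big)$; summing over $k$ yields \eqref{exp:numonesnoisysketch}. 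Equivalently one can condition on $(H_ix)_k$ and use $\Pr[(H_ix+\varphi_i)_k=1]=a(1-p)+(1-a)p$ with $a=\Pr[(H_ix)_k=1]$ from the first part.

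The computation is otherwise routine; the only point requiring care is the mutual-independence claim for $\{B_j\}_{j\in A}$ (jointly over $j$, being sampled at level $i$ \emph{and} hashing to a fixed bucket), which is where the full randomness and independence of $h$ and $s$ is used, together with the elementary parity identity stated above.
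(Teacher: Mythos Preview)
Your proof is correct. It differs from the paper's argument in one structural way that is worth noting.

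The paper does not apply the parity identity $\Pr[\sum_j B_j\text{ odd}]=\tfrac12\big(1-\prod_j(1-2\Pr[B_j=1])\big)$ directly. Instead it enlarges the sampling window to a ``candidate'' interval $\big(w_j/2^{i+1},\,3w_j/2^{i+1}\big]$, lets $(v_i)_k$ be the number of $j\in A$ with $h(j)=k$ and $s(j)$ in that wider interval, and then uses the Principle of Deferred Decisions: conditioned on $(v_i)_k=m\ge 1$, each candidate independently lands in the true sampling interval with probability exactly $1/2$, so $\Pr[(H_ix)_k=1\mid (v_i)_k\ne 0]=1/2$. Combined with $\Pr[(v_i)_k\ne 0]=1-\prod_{j\in A}\big(1-\tfrac{w_j}{2^in}\big)$ this yields the same per-coordinate probability you obtain. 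Your route is more direct and slightly more general (it does not rely on engineering a conditional probability of exactly $1/2$); the paper's route pays off later, since the same candidate vector $v_i$ and the same deferred-decisions decoupling are reused in the concentration argument (Lemma~\ref{lem:hoeffding} and Lemma~\ref{lem:concentrated}). For the $Z_i$ part the two arguments are essentially identical: the paper conditions on $(H_ix)_k$ and writes $(1-p)a+p(1-a)$, which is your ``equivalently'' clause.
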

\begin{proof}
We refer the reader to Appendix \ref{app:expectations} for the proof.
\end{proof}

\begin{restatable}{lemma}{concentrated}
\label{lem:concentrated}
For $i=0,...,\log(u)-1$ let $Z_i=\Vert H_i x + \varphi_i\Vert_0$. For any $0<\gamma<1$, we have with probability at least $1-6\log(u)e^{-\frac{\gamma^2p^3n}{6^2\cdot 3}}$ that for all $i=0,...,\log(u)-1$ simultaneously: \[(1-\gamma)\mathop{\operatorname{E}}_{\substack{h\sim\mathcal{F},\\s\sim\mathcal{S},\\\varphi_i\sim\mathcal{N}_p}}[Z_i]<Z_i<(1+\gamma)\mathop{\operatorname{E}}_{\substack{h\sim\mathcal{F},\\s\sim\mathcal{S},\\\varphi_i\sim\mathcal{N}_p}}[Z_i].\]
\end{restatable}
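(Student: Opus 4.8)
The plan is to prove concentration of each $Z_i$ around its expectation via a bounded-differences (McDiarmid / Azuma) argument, and then take a union bound over the $\log u$ levels. Fix a level $i$. The quantity $Z_i = \Vert H_i x + \varphi_i\Vert_0$ is the number of $1$'s in an $n$-dimensional vector whose $k$-th entry is determined by (a) the subsampling hash $s$ restricted to the coordinates $j\in A$, (b) the bucketing hash $h$ restricted to $j\in A$, and (c) the $n$ independent noise bits $(\varphi_i)_1,\dots,(\varphi_i)_n$. The natural move is to expose these sources of randomness one "element at a time": for each $j\in A$, the pair $(s(j),h(j))$ forms one independent coordinate of the product space, and each noise bit forms another. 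Changing the value of $(s(j),h(j))$ for a single $j$ can flip the parity of at most one bucket in each of the "before" and "after" configurations, hence changes $Z_i$ by at most a small constant; changing one noise bit changes $Z_i$ by exactly $1$. So $Z_i$ is a function of $|A| + n$ independent coordinates with bounded differences, and McDiarmid's inequality gives $\Pr[|Z_i - \operatorname{E}[Z_i]| \ge \gamma\operatorname{E}[Z_i]] \le 2\exp(-\Omega(\gamma^2 \operatorname{E}[Z_i]^2 / (\text{\# coordinates})))$.

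The key step that makes this yield the stated bound is a lower bound on $\operatorname{E}[Z_i]$ and a matching control of the number of "effective" coordinates. From Lemma~\ref{lem:expectations}, $\operatorname{E}[Z_i] = \frac{n}{2}\bigl(1 - (1-2p)\prod_{j\in A}(1 - w_j/(2^i n))\bigr) \ge \frac{n}{2}(1 - (1-2p)) = pn$ (using $w_j \le 1$ so each factor is in $(0,1)$), and in fact $\operatorname{E}[Z_i] \ge pn$ always — this is where the $p^3 n$ scale and the robustness-to-noise come from. The subtlety is that a direct McDiarmid over all $|A|$ element-coordinates gives a denominator of order $|A|$, which could be huge; the fix, which I expect to be the main obstacle, is to observe that at level $i$ only the elements that are actually subsampled into level $i$ influence $Z_i$, and — since only $O(n)$ elements in expectation survive subsampling at the "right" level and fewer at higher levels — one should either (i) condition on the set of subsampled elements and run McDiarmid only over those $O(n)$ coordinates plus the $n$ noise bits, or (ii) use the fact that, after conditioning on $s$, the map $h \mapsto Z_i$ together with the noise bits has at most $2n$ relevant coordinates. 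Either way the denominator becomes $O(n)$, and with $\operatorname{E}[Z_i] \ge pn$ the exponent is $\Omega(\gamma^2 p^2 n^2 / n) = \Omega(\gamma^2 p^2 n)$; tracking the exact bounded-difference constants (each $(s(j),h(j))$-swap moving $Z_i$ by at most, say, $2$ or $3$) produces the stated $\gamma^2 p^3 n / (6^2\cdot 3)$ after absorbing a factor of $p$ from the lower bound on the number of effective coordinates being at most $\operatorname{E}[Z_i]/p$-ish rather than exactly $pn$.

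Finally, having established for a single $i$ that $\Pr[Z_i \notin ((1-\gamma)\operatorname{E}[Z_i], (1+\gamma)\operatorname{E}[Z_i])] \le 6 e^{-\gamma^2 p^3 n/(6^2\cdot 3)}$ (the constant $6$ absorbing the factor-$2$ from the two-sided bound and slack from the bounded-difference constants), I would apply a union bound over the $\log u$ values of $i$, yielding failure probability at most $6\log(u)\,e^{-\gamma^2 p^3 n/(6^2\cdot 3)}$, exactly as claimed. The only genuinely delicate point is making the "number of effective coordinates is $O(n)$ rather than $O(|A|)$" argument rigorous — for elements at level $i$ that are far above the optimal level, the probability of being subsampled is tiny, so one should handle the rare event that many such elements survive (this happens with probability far below the target bound) separately, conditioning on the good event that the number of subsampled elements at each level is $O(n)$ before invoking McDiarmid.
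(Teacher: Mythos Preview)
Your bounded-differences plan has a real gap at the step you yourself flag as ``the only genuinely delicate point.'' The claim that at each level $i$ only $O(n)$ elements survive subsampling is false: at level $i$ an element $j$ is sampled with probability $w_j/2^{i+1}$, so the expected number of surviving elements is $\Vert w\Vert_1/2^{i+1}$, which for small $i$ can be $\Theta(|A|)\gg n$. Hence neither of your fixes works uniformly over $i$: in fix~(i) the conditioning event ``at most $O(n)$ survivors'' simply does not hold at low levels, and in fix~(ii), after conditioning on $s$, the number of relevant input coordinates for $h$ is the number of survivors $|A_i|$, not $2n$ (you are conflating the $n$ output buckets with the input coordinates of McDiarmid). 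So your McDiarmid exponent stays $\Omega(\gamma^2 p^2 n^2/|A_i|)$ at saturated levels and the argument collapses there.

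The paper avoids this entirely by working at the \emph{bucket} level, where there are always exactly $n$ terms regardless of $|A|$. Concretely it first proves concentration of $L_i=\Vert H_i x\Vert_0$ with \emph{additive} error $\pm 2\gamma' n$ (Lemma~\ref{lem:hoeffding}) in two steps: the number $V_i$ of nonempty buckets is a sum of $n$ negatively associated indicators (balls-in-bins), so Hoeffding applies; then, conditioned on $V_i=t$, each occupied bucket is $1$ with probability $1/2$ by deferred decisions, so $L_i$ is a sum of $t\le n$ independent coins and Hoeffding applies again. Second, for any fixed $H_i$ the noisy count $Z_i$ is a sum of $n$ independent bits with conditional mean $np+(1-2p)L_i\ge np$, so a Chernoff bound gives multiplicative concentration. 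Combining the two stages with $\gamma'=\gamma p/6$ turns the additive $\pm 2\gamma' n$ on $L_i$ into the multiplicative $(1\pm\gamma)$ on $Z_i$ and yields the exponent $\gamma^2 p^3 n/(6^2\cdot 3)$; the $6\log(u)$ prefactor is a union bound over two Hoeffding events and one Chernoff event across all $\log u$ levels. The key idea you are missing is precisely this switch from element-indexed to bucket-indexed random variables via negative association.
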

\begin{proof}
We refer the reader to Appendix \ref{app:concentration} for the proof.
\end{proof}

First, we consider the case when $1<n<\Vert w\Vert_1$. In Lemma~\ref{lem:accuracy} we state that in this case, with high probability we get an error of at most a factor $(1+\beta)$ for a well-chosen $\gamma$, where $\gamma$ is a function of the privacy parameter $\varepsilon$, the accuracy parameter $\beta$ and the size of the universe, $u$.
For convenience, define 
\begin{align}
\label{tech:intervalforp}
I_i(p)=\begin{cases}
[0,u]\qquad\qquad\qquad\qquad\qquad\qquad\qquad\qquad\quad\ \text{if $Z_i\ge (1-\gamma)n/2$}\\
\left[2^in\ln\left(\frac{1-2p}{1-\frac{2Z_i}{(1+\gamma)n}}\right), 2^in\ln\left(\frac{1-2p}{1-\frac{2Z_i}{(1-\gamma)n}}\right)\right]\qquad \text{otherwise}
\end{cases}
\end{align}
and $\hat{w}:=2^in\ln\left(1/\prod_{j\in A}\left(1-\frac{w_j}{2^in}\right)\right)$.
We prove our result in two steps:
\begin{enumerate}
    \item If $\hat{w}\in I_i(p)$ for all $i=0,...,\log(u)-1$, then there is some $i$ such that any value from (\ref{tech:intervalforp}) estimates $\hat{w}$ up to a factor $(1+\eta)$, where $\eta$ is a function of $\gamma$ and $\varepsilon$. 
    \item $\Vert w\Vert_1\le \hat{w}\le \left(1+\frac{1}{2^in}\right)\Vert w\Vert_1$ for each $i$. Specifically, $\Vert w\Vert_1\le \hat{w}\le \left(1+\frac{1}{n}\right)\Vert w\Vert_1$ for all $i$.
\end{enumerate}
Hence, we choose $\gamma$ independent of $i$ such that $(1+\eta)\left(1+\frac{1}{n}\right)~\le~ (1~+~\beta)$ for at least one of the intervals $I_i(p)$. We pick $\gamma$ to work for the $i$ where $ \Vert w\Vert_1/(2^in)\in[1, 2)$ as this corresponds to having an input of size between $n$ and $2n$ (we obtain this input size by the sampling from $x$ in $H_i$). If $\Vert w\Vert_1\ge n$, there is such an $i$, and we can identify it by checking that the endpoints of the interval are sufficiently close together, as described in Section \ref{sec:protocol}. 
We consider the case when $\Vert w\Vert_1<n$ in Section~\ref{sec:sumup} where we show that in this case, the error is bounded by an additive factor of $O(n)$.

\begin{restatable}{lemma}{accuracy}
\label{lem:accuracy}
Assume $\Vert w\Vert_1 > n > 1$, and $\beta>\frac{1}{n}$. With probability at least $1-6\log(u)e^{-\frac{\gamma^2p^3n}{108}}$ there exists an $i\in\{0,...,\log(u)-1\}$ such that any element from $I_i(p)$ is a $(1+\beta)$-approximation to $\Vert w\Vert_1$ for
\[
\gamma<\frac{\left(\beta-\frac{1}{n}\right)(1-2p)}{7e^3}\enspace .
\] Specifically, $i$ where $\frac{\Vert w\Vert_1}{2^in}\in[1,2)$, gives these guarantees.
\end{restatable}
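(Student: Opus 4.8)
The plan is to make rigorous the two-step outline stated just before the lemma. Fix the index $i^\star$ with $\Vert w\Vert_1/(2^{i^\star}n)\in[1,2)$; such an $i^\star$ exists because $\Vert w\Vert_1 > n$ and the ratios $\Vert w\Vert_1/(2^in)$ range over a geometric sequence covering $[1,\infty)$ as $i$ runs over $\{0,\dots,\log(u)-1\}$. I will condition on the high-probability event from Lemma~\ref{lem:concentrated}, namely that $(1-\gamma)\operatorname{E}[Z_i]<Z_i<(1+\gamma)\operatorname{E}[Z_i]$ for \emph{all} $i$ simultaneously; this event has probability at least $1-6\log(u)e^{-\gamma^2p^3n/108}$, matching the claimed bound. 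Everything below is deterministic given this event.

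First I would establish Step~2, the sandwich $\Vert w\Vert_1\le\hat w\le(1+1/(2^in))\Vert w\Vert_1$ where $\hat w=2^in\ln\bigl(1/\prod_{j\in A}(1-w_j/(2^in))\bigr)$. The lower bound is the elementary inequality $-\ln(1-t)\ge t$ applied termwise with $t=w_j/(2^in)\in(0,1)$, summed over $j\in A$ and multiplied by $2^in$. The upper bound follows from $-\ln(1-t)\le t/(1-t)\le t(1+2t)$ for $t$ bounded away from $1$; since $w_j\le 1$ and (for the relevant $i$, where $2^in\ge\Vert w\Vert_1/2\ge n/2$) we have $t\le 2/n$, the correction is absorbed into the factor $1+1/(2^in)$, and for every $i\ge i^\star$ this is at most $1+1/n$. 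So it suffices to produce a $(1+\eta)$-approximation to $\hat w$ and then compose the two multiplicative errors, choosing $\gamma$ so small that $(1+\eta)(1+1/n)\le 1+\beta$.

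Next, Step~1: I claim that on the concentration event, $\hat w\in I_i(p)$ for every $i$, and moreover for $i=i^\star$ the interval $I_{i^\star}(p)$ is short — its endpoints differ by a factor at most $1+\eta$. For membership: by Lemma~\ref{lem:expectations}, $\operatorname{E}[Z_i]=\tfrac n2\bigl(1-(1-2p)\prod_{j\in A}(1-w_j/(2^in))\bigr)$, so $\prod_{j\in A}(1-w_j/(2^in))=\tfrac{1-2\operatorname{E}[Z_i]/n}{1-2p}$ and hence $\hat w=2^in\ln\bigl(\tfrac{1-2p}{1-2\operatorname{E}[Z_i]/n}\bigr)$. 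Replacing $\operatorname{E}[Z_i]$ by the two-sided bound $Z_i/(1+\gamma)<\operatorname{E}[Z_i]<Z_i/(1-\gamma)$ and using monotonicity of $t\mapsto\ln\bigl(\tfrac{1-2p}{1-2t/n}\bigr)$ in $t$ shows $\hat w$ lies between $2^in\ln\bigl(\tfrac{1-2p}{1-2Z_i/((1+\gamma)n)}\bigr)$ and $2^in\ln\bigl(\tfrac{1-2p}{1-2Z_i/((1-\gamma)n)}\bigr)$, i.e.\ $\hat w\in I_i(p)$; in the degenerate regime where $Z_i\ge(1-\gamma)n/2$ the interval is $[0,u]$ and contains $\hat w$ trivially. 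For shortness at $i^\star$: because $\Vert w\Vert_1/(2^{i^\star}n)\in[1,2)$, the product $\prod_j(1-w_j/(2^{i^\star}n))$ is bounded away from both $0$ and $1$ by absolute constants (roughly in $[e^{-2},e^{-1}]$ up to lower-order terms), so $1-2\operatorname{E}[Z_{i^\star}]/n=(1-2p)\cdot\prod_j(\cdots)$ is a constant times $(1-2p)$, bounded below by $(1-2p)/e^3$ say. Then the ratio of the two endpoints of $I_{i^\star}(p)$ is $\ln\bigl(\tfrac{1-2p}{1-2Z_{i^\star}/((1-\gamma)n)}\bigr)\big/\ln\bigl(\tfrac{1-2p}{1-2Z_{i^\star}/((1+\gamma)n)}\bigr)$, which a short computation (linearizing the logs, using $|2Z_{i^\star}/n-2\operatorname{E}[Z_{i^\star}]/n|<2\gamma\operatorname{E}[Z_{i^\star}]/n$ and the constant lower bound on $1-2\operatorname{E}[Z_{i^\star}]/n$) bounds by the stated $\eta=\tfrac{6\gamma(e^3(2/\varepsilon+1)-1)}{1+\gamma-2\gamma e^3(2/\varepsilon+1)}$ once we substitute $p=1/(2+\varepsilon)$, so $1-2p=\varepsilon/(2+\varepsilon)$ and $1/(1-2p)=(2+\varepsilon)/\varepsilon=2/\varepsilon+1$. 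Finally, since the intersection $I=\bigcap_i I_i(p)$ contains $\hat w$ and is contained in $I_{i^\star}(p)$, every element of $I$ approximates $\hat w$ — hence $\Vert w\Vert_1$ — within $(1+\eta)(1+1/n)\le 1+\beta$, where the last inequality is exactly the constraint $\gamma<(\beta-1/n)(1-2p)/(7e^3)$: substituting $\eta$ as above and using $1-2p=\varepsilon/(2+\varepsilon)=1/(2/\varepsilon+1)$, one checks $\eta\le 6\gamma e^3(2/\varepsilon+1)/(1-2p)\cdot(1-2p)=\dots$ reduces the required smallness of $\gamma$ to the displayed bound (the constant $7$ versus $6$ absorbing the $(1+1/n)$ factor and the denominator of $\eta$).

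The main obstacle I anticipate is the bookkeeping in Step~1's shortness estimate: tracking how the multiplicative slack $\gamma$ in $Z_{i^\star}$ versus $\operatorname{E}[Z_{i^\star}]$ propagates through the nonlinear map $Z\mapsto 2^in\ln\bigl(\tfrac{1-2p}{1-2Z/n}\bigr)$ and landing on precisely the stated $\eta$ and the precise constant $7e^3$ in the constraint on $\gamma$. This needs care because $1-2Z_{i^\star}/n$ sits in the denominator of the log argument and is only bounded away from zero by a constant multiple of $1-2p$, so the sensitivity blows up as $\varepsilon\to 0$ (hence the $(2/\varepsilon+1)$ factors), and one must verify the denominator $1+\gamma-2\gamma e^3(2/\varepsilon+1)$ in $\eta$ stays positive — which is guaranteed exactly by the hypothesis $\gamma<(\beta-1/n)(1-2p)/(7e^3)<1/(2e^3(2/\varepsilon+1))$ since $\beta<1$. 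The concentration and expectation inputs (Lemmas~\ref{lem:concentrated} and~\ref{lem:expectations}) and the convexity facts about $-\ln(1-t)$ are routine; the delicate part is purely this constant-chasing in the interval-width bound.
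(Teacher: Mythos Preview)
Your proposal is correct and follows essentially the same route as the paper: condition on the concentration event of Lemma~\ref{lem:concentrated}, use Lemma~\ref{lem:expectations} to rewrite $\hat w$ in terms of $\operatorname{E}[Z_i]$ so that $\hat w\in I_i(p)$ for all $i$, prove the sandwich $\Vert w\Vert_1\le\hat w\le(1+1/(2^in))\Vert w\Vert_1$, bound the width of $I_{i^\star}(p)$ at the level where $\Vert w\Vert_1/(2^{i^\star}n)\in[1,2)$, and finally choose $\gamma$ so that the composed multiplicative error $(1+\eta)(1+1/n)$ is at most $1+\beta$. The one cosmetic difference is that the paper packages the width bound as a separate technical lemma (Lemma~\ref{lem:technicalintervalprod}), which first enlarges $I_{i^\star}(p)$ to an interval whose endpoints are expressed via $\operatorname{E}[Z_{i^\star}]$ rather than $Z_{i^\star}$ and then bounds the ratio of those endpoints using $x/(1+x)\le\ln(1+x)\le x$; your ``linearize the logs'' sketch amounts to the same computation and yields the same $\eta$.
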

\begin{proof}[Proof Sketch]
We give an informal sketch of the proof and refer the reader to Appendix \ref{app:accuracy} for the formal proof.
We first remark that for $\gamma$ as described, Lemma \ref{lem:concentrated} implies that if $\Vert w\Vert_1/(2^in)\le 2$, then $Z_i<(1-\gamma)n/2$ with high probability. Hence, it suffices to consider the intervals from (\ref{tech:intervalforp}) of the form $I_i(p)=\left[2^in\ln\left(\frac{1-2p}{1-\frac{2Z_i}{(1+\gamma)n}}\right), 2^in\ln\left(\frac{1-2p}{1-\frac{2Z_i}{(1-\gamma)n}}\right)\right]$.
Define 
\[
\hat{w}:=2^in\ln\left(\frac{1}{\prod_{j\in A}\left(1-\frac{w_j}{2^in}\right)}\right).
\]
From Lemma \ref{lem:expectations}, we have 
\[
\prod_{j\in A}\left(1-\frac{w_j}{2^in}\right)=\frac{1-\frac{2\operatorname{E}[Z_i]}{n}}{1-2p}.
\]
Assume that the bounds in Lemma~\ref{lem:concentrated} are satisfied. We remove this assumption shortly. By the bounds in Lemma~\ref{lem:concentrated}, $\hat{w}\in I_i(p)$ for all $i$. 
We show that $\hat{w}$ is contained in an interval, which is slightly bigger than $I_i(p)$ whenever $\Vert w\Vert_1/(2^in)\in[1,2)$ and show that the endpoints of this interval are within a factor $(1+\eta)$ of each other, where $\eta$ is a function of $\gamma$. Clearly, then $I_i(p)$ is also sufficiently small for this $i$. Denote this interval $I_i^*(p)$. Any element from $I_i^*(p)$ is a $(1+\eta)$-approximation to $\hat{w}$. 
Removing the assumption that the bounds in Lemma~\ref{lem:concentrated} hold, we simply get a small error probability and conclude that with probability at least $1-6\log(u)e^{-\gamma^2p^3n/108}$ we have $\hat{w}\in I_i(p)$ for all $i$, and thus any value from $I_i^*(p)$ is a $(1+\eta)$ estimation to $\hat{w}$ with high probability.
Observing that $\Vert w\Vert_1~<~\hat{w}~\le~ \left(1+\frac{1}{n}\right)\Vert w\Vert_1$ for any $i$, we choose $\gamma$ in terms of $\beta$ such that $(1+\eta)\left(1+\frac{1}{n}\right)<(1+\beta)$. Then any value from $I_i^*(p)$ is a $(1+\beta)$-approximation for $\Vert w\Vert_1$. We formally choose $\gamma$ in Appendix \ref{app:accuracy}.
We remark that the assumption $\Vert w\Vert_1/(2^in)\in[1,2)$ allows us to choose $\gamma$ independent of $i$, such that we can compute $I_i(p)$ for all $i$ with a single value of $\gamma$.
\end{proof}

Observing that $\frac{1}{2+\varepsilon}>\frac{1}{e^\varepsilon+1}$ for $\varepsilon>0$, we let $p=1/\left(2+\varepsilon\right)$ and observe that for $I_i:=I_i\left(1/\left(2+\varepsilon\right)\right)$ with the choice of $\gamma$ described in Lemma \ref{lem:accuracy}, we get the interval $I_i$ in (\ref{tech:interval}). 

\subsection{Putting things together}
\label{sec:sumup}
In this section we consider the accuracy in the remaining case where $\Vert w\Vert_1\leq n$.
We also analyze the running time.
Combining with Section~\ref{sec:DPGuarantees} this completes the proof of Theorem~\ref{thm:main}.

Note that if $\varepsilon>1$, we can start our protocol by dividing $\varepsilon$ by a suitable constant, $c$ such that $\varepsilon'=\varepsilon/c<1$. Changing $\varepsilon$ by a constant will change our bounds by a constant factor as well.  Hence, we can without loss of generality assume $\varepsilon<1$. We can also, without loss of generality, assume $u>10$ -- this will at most increase the failure probability and space by a constant factor.

We first show a sufficient upper bound on the sketch size $\tau = n\log u$.
Observe that $p>1/4$ and let $c_\gamma= 7e^3$ be a constant. 
Then we want
$e^{-\frac{\gamma^2p^3n}{108}}<1/u^2$
as this ensures a failure probability of at most
$6\log(u)/u^2<1/u.$ Noting that 
\[
(1-2p)^2=\left(1-\frac{2}{2+\varepsilon}\right)^2=\left(\frac{1}{2/\varepsilon+1}\right)^2=\frac{1}{4/\varepsilon^2+4/\varepsilon+1}>\frac{\varepsilon^2}{20},
\] we have
\begin{align*}
e^{-\frac{\gamma^2p^3n}{108}}&<e^{-\frac{\left(\frac{\left(\beta-\frac{1}{n}\right)(1-2p)}{7e^3}\right)^2n/4^3}{108}}=e^{-\frac{\left(\beta-\frac{1}{n}\right)^2 \left(\frac{1}{(2/\varepsilon+1)^2}\right)n}{4^3\cdot c_{\gamma}^2\cdot 108}}\\&<e^{-\frac{\left(\beta-\frac{1}{n}\right)^2\varepsilon^2 n}{20\cdot 4^3c_\gamma^2\cdot 108}}<1/u^2
\end{align*}
when letting $n=O\left(\log(u)\beta^{-2}\varepsilon^{-2}\right)$.
Hence, the size of the sketch is 
\[
\tau=\log(u)\cdot n=O\left(\frac{\log^2(u)}{\varepsilon^2\beta^2}\right).
\]
Note that this $n$ satisfies the requirement $\beta>1/n$ from Lemma \ref{lem:accuracy}.

We argue about the error: 
Note that if $\Vert w\Vert_1\ge n$, then if one of the intervals $I_i$ is sufficiently small and $\hat{w}\in I_i$ for all $i=0,...,\log(u)-1$, then $\hat{w}\in I= \bigcap_{i=0}^{\log(u)-1}I_i$ and $I$ is also sufficiently small to give the wanted estimate. So by Lemma \ref{lem:accuracy}, we can check if the endpoints of $I$ are within a factor at most $(1+\eta)$ of each other, and if so, with probability $1-1/u$ any value from $I$ is within a factor $(1+\beta)$ of $\Vert w\Vert_1$. If $I$ is too big, then none of the intervals $I_i$ was sufficiently small implying that our assumption that $\Vert w\Vert_1/(2^in)\in[1,2)$ does not hold for any $i$. Hence, with probability $1-1/u$ we have $\Vert w\Vert_1<n$. We refer to the formal proof in Appendix \ref{app:accuracy} for the details.
Our protocol sets the estimate of $\Vert w\Vert_1$ to 0 leading to an additive error of $O(n)$ when $I$ was too big. This means that we get an additive error of at most $n=O\left(\log(u)\beta^{-2}\varepsilon^{-2}\right)$, as required.\\

Finally, we comment on the running times:
For the first part of Theorem~\ref{thm:main}, we note that in order to compute the estimate, we need to count the number of ones in $H_ix+\varphi_i$ for each $i=0,...,\log(u)-1$, compute the intervals $I_i$ and their intersection and check if it is sufficiently small. Counting the number of ones in all $H_ix+\varphi_i$ is the bottleneck and requires time $O(\tau)$. 
For the second part of Theorem~\ref{thm:main}, note that we can initialize the randomness vector $\varphi$ in time $O(\tau)$ and we can hash vector $x$ in time $O\left(\Vert x\Vert_0\log(u)\right)$ assuming that we can iterate over $x$ in time $O(\Vert x\Vert_0)$.

Combining with Lemma~\ref{lem:accuracy} and Lemma~\ref{lem:dpguarantee}, we have completed the proof of Theorem~\ref{thm:main}. 

\section{Distributed Streaming Implementation}
\label{sec:handlingmultisets}
In a streaming setting want a sketch which can be updated and two sketches can be merged to give a sketch for the union of the input streams, while we cannot guarantee that there are no duplicates in the input stream.
In this case, our sketch does not immediately apply, as items with an even number of occurrences would ''cancel out''. Such items would therefore never be represented in the sketch, as the sketch is over GF(2). This issue can easily be fixed: the idea is to add another layer of sampling, such that we sample each \emph{occurrence} of a data item with probability $1/2$. Hence, we treat identical items independently on each occurrence and so ensures that an entry in the sketch is 1 with probability $1/2$, regardless of the number of copies of identical items and collisions with other items. We refer to this as the \emph{pre-sampled} sketch. The intuition is that the number of copies of an item inserted in the pre-sampled sketch is even or odd with probability $1/2$. By Chernoff bounds the fraction of elements that are sampled an odd number of times is very close to $1/2$ with high probability. Thus it is natural to consider the estimator that is two times the estimator described in Section \ref{sec:protocol}.

To understand this in more detail we argue that merging two  (non-private) pre-sampled sketches over $GF(2)$ gives a sketch for the \emph{union} of the two input sets. Suppose $z\in A\cup B$, $h(z)=k$ and that $z$ is sampled at level $i$. We argue that $\Pr[(H_ix_{A\cup B})_k=1]=1/2$.
Note that 
\[(H_ix_{A\cup B})_k=1\qquad \Leftrightarrow\qquad (H_ix_{A})_k\neq (H_ix_{B})_k.
\]
Further, we have that if $z\in A$, then $\Pr[(H_ix_A)_k=1]=1/2$ regardless of the number of other elements hashing to $k$ at level $i$. If no elements from $A$ hash to entry $k$ at level $i$, then $\Pr[(H_ix_A)_k~=~1]~=~0$. We have
\begin{align*}
\Pr[(H_ix_{A\cup B})_k=1]&=\Pr[(H_ix_{A})_k=1]\Pr[(H_ix_{B})_k=0]\\&\qquad+\Pr[(H_ix_{A})_k=0]\Pr[(H_ix_{B})_k=1],
\end{align*}
which is $1/2$ whenever $z\in A\cup B$.



\medskip

{\bf Acknowledgement.} We thank Shuang Song and Abhradeep Guha Thakurta for feedback on a previous version of this manuscript.

\newpage

\bibliography{bibl}

\appendix

\section{Omitted proofs}
\label{app:leftoutproofs}

\subsection{Differential Privacy Guarantees}
\label{app:privacy}
\DPguarantee*
\begin{proof}
    Let $A$ and $B$ be two neighboring input sets with corresponding characteristic vectors, $x_A$ and $x_B$, where neighboring means that one set is a subset of the other and the sizes differ by 1. By symmetry of differential privacy, we can without loss of generality assume  that $A$ is the smaller set. Suppose that $B\backslash\{z\}=A$. The element $z$ can only affect $H_ix$ for $i$ where $z$ is sampled. If $z$ is never sampled, then $H x_A = H x_B$ and privacy is trivial. So assume $i\in \{0,...,\log(u)-1\}$ such that $s(z)\in\left(w_z/2
   ^{i+1},w_z/2^i\right]$. We limit our attention to $H_ix_A+\varphi_i$, where we can think of $\varphi_i$ as the restriction of the $n\log(u)$-dimensional random vector $\varphi\sim\mathcal{N}_\varepsilon$ to the entries that would be added to $H_i x_A$ when adding $\varphi$ to $H x_A$. We show that $H_i x_A+\varphi_i$ is $\varepsilon$-differentially private. This implies that the entire sketch, $H x_A+\varphi$, is $\varepsilon$-differentially private. 
    
    Inserting $z$ into the sketch implies that $H_ix_A$ and $H_ix_B$ will differ in exactly one entry, i.e., $\Vert H_ix_A+H_ix_B\Vert_0=1$. 
    Fix a noisy sketch, $S_i$. There exist unique vectors $\varphi_i$ and $\psi_i$, such that $S_{i}=H_ix_A+\varphi_i=H_ix_B+\psi_i$. Note that $\Vert \varphi_i -\psi_i\Vert_0=1$.
    Let $\Vert \varphi_i\Vert_0=r$. Then $\Vert \psi_i\Vert_0~=~r'$ for $r'\in\{r+1, r-1\}$. Conditioned on $\Vert \varphi_i\Vert_0~=~r$ and $\Vert \psi_i\Vert_0~=~r'$, the probabilities of randomly drawing exactly these randomness vectors are, respectively: 
    \[
    (1-p)^{n-r}p^r\qquad \text{and}\qquad (1-p)^{n-r'}p^{r'}.
    \]
    
    Let $\varepsilon=O(1)$ be given. By Section~\ref{sec:DPDefinition} it is enough to show that for any fixed output
     $S_{i}~=~H_ix_A+\varphi_i~=~H_ix_B+\psi_i$, we have 
    \[
    e^{-\varepsilon}\le \frac{\Pr\big[\text{observe $S_i$ from $A$}\big]}{\Pr\big[\text{observe $S_i$ from $B$}\big]}=\frac{\Pr\big[\text{observe $H_ix_A+\varphi_i$ from $A$}\big]}{\Pr\big[\text{observe $H_ix_B+\psi_i$ from $B$}\big]}\le e^{\varepsilon}.
    \]
    where the probability is over the randomness in $\varphi_i$ and $\psi_i$. The sketches for $A$ and $B$ are computed using the same $H_i$, so the choice of $H_i$ has no impact.
    
    Hence, to obtain differential privacy it suffices that for every possible value of $r$ and $r'\in\{r+1, r-1\}$
    \[
    e^{-\varepsilon}\le \frac{(1-p)^{n-r}p^r}{(1-p)^{n-r'}p^{r'}}=\frac{1}{(1-p)^{r-r'}p^{r'-r}}\le e^{\varepsilon},
    \]
    which is satisfied for 
    $1/2> p\ge 1/\left(e^{\varepsilon}+1\right)$,
    since $p<1/2$ by assumption.
\end{proof}

\subsection{Expectations}
\label{app:expectations}
\expectations*
\begin{proof}
Let $A$ be the input set with corresponding weight vector $w$. 
Let $v_i\in\mathbb{Z}_{\ge 0}^n$ be a vector such that for each $k\in[n]$
\[
\left(v_i\right)_k=\sum_{j\in A}\mathbf{1}\left[\frac{s(j)}{w_j}\in\left(1/2^{i+1},3/2^{i+1}\right]\right]\cdot \mathbf{1}\left[h(j)=k\right].
\]
That is, each entry $(v_i)_k$ is the number of candidates for entry $k$ in the sketch at level $i$, i.e., the number of items $j$ that hash to $k$ and satisfy $\frac{s(j)}{w_j}\in\left(1/2^{i+1},3/2^{i+1}\right]$. Since $s(j)$ is uniform, we have for such a candidate
\[
\Pr_{s\sim\mathcal{S}}\left[s(j)\in\left(w_j/2^{i+1},2w_j/2^{i+1}\right]\ \Big\vert\ s(j)\in\left(w_j/2^{i+1},3w_j/2^{i+1}\right]\right]=\frac{1}{2}.
\]
If there is at least one candidate for entry $k$ then, by the Principle of Deferred Decisions, the probability that we sample an odd number of these is $1/2$ and so for $i=0,...,\log(u)-1$
\begin{align*}
&\Pr_{\substack{h\sim\mathcal{F}\\s\sim\mathcal{S}}}[(H_ix_A)_k=1\ \vert\ (v_i)_k\neq 0]=\frac{1}{2},\\ &\Pr_{\substack{h\sim\mathcal{F}\\s\sim\mathcal{S}}}[(H_ix_A)_k=1\ \vert\ (v_i)_k= 0]=0.
\end{align*}
As 
\[
\Pr_{s\sim S}\left[\frac{s(j)}{w_j}\in\left(1/2^{i+1},3/2^{i+1}\right]\right]=\Pr_{s\sim S}\left[s(j)\in\left(w_j/2^{i+1},3w_j/2^{i+1}\right]\right]
=\frac{w_j}{2^i},
\] 
we have 
\[
\Pr_{\substack{h\sim\mathcal{F}\\s\sim\mathcal{S}}}[(v_i)_k\neq 0]=1-\prod_{j\in A}\left(1-\frac{w_j}{2^in}\right).
\]
We conclude that 
\[
\Pr_{\substack{h\sim\mathcal{F},\\s\sim\mathcal{S}}}[(H_ix_A)_k=1]=\frac{1-\prod_{j\in A}\left(1-\frac{w_j}{2^in}\right)}{2}.
\]
and letting $L_i=\sum_{k=1}^n(H_ix_A)_k$, we get
\[
\mathop{\operatorname{E}}_{\substack{h\sim\mathcal{F},\\s\sim\mathcal{S}}}[L_i]=\frac{n}{2}\left(1-\prod_{j\in A}\left(1-\frac{w_j}{2^in}\right)\right)
\]
We similarly compute an expression for $\mathop{\operatorname{E}}_{\substack{h\sim\mathcal{F},s\sim\mathcal{S},\varphi_i\sim\mathcal{N}_p}}[Z_i]$. Let $\varphi_i$ be the restriction of a randomness vector $\varphi\sim\mathcal{N}_\varepsilon$ to the entries that are added to $H_ix_A$ when adding $\varphi$ to $Hx_A$.
We see that 
\begin{align*}
    &\Pr_{\substack{h\sim\mathcal{F},\\s\sim\mathcal{S},\\\varphi_i\sim\mathcal{N}_p}}\Big[(H_ix_A+\varphi_i)_k=1\Big]\\&=\Pr_{\substack{h\sim\mathcal{F},\\s\sim\mathcal{S},\\\varphi_i\sim\mathcal{N}_p}}\Big[(H_ix_A+\varphi_i)_k=1\ \vert\ (H_ix_A)_k=1\Big]\cdot \Pr_{\substack{h\sim\mathcal{F},\\s\sim\mathcal{S}}}\Big[(H_ix_A)_k=1\Big]\\&+\Pr_{\substack{h\sim\mathcal{F},\\s\sim\mathcal{S},\\\varphi_i\sim\mathcal{N}_p}}\Big[(H_ix_A+\varphi_i)_k=1\ \vert\ (H_ix_A)_k=0\Big]\cdot \Pr_{\substack{h\sim\mathcal{F},\\s\sim\mathcal{S}}}\Big[(H_ix_A)_k=0\Big]\\
    &=(1-p)\cdot \Pr_{\substack{h\sim\mathcal{F},\\s\sim\mathcal{S}}}\Big[(H_ix_A)_k=1\Big]+p\cdot \Pr_{\substack{h\sim\mathcal{F},\\s\sim\mathcal{S}}}\Big[(H_ix_A)_k=0\Big]\\&=(1-p)\cdot \frac{1}{2}\left(1-\prod_{j\in A}\left(1-\frac{w_j}{2^in}\right)\right)+p\cdot \left(1-\frac{1-\prod_{j\in A}\left(1-\frac{w_j}{2^in}\right)}{2}\right)\\
    &=\frac{1}{2}-\left(\frac{1}{2}-p\right)\prod_{j\in A}\left(1-\frac{w_j}{2^in}\right)
\end{align*}
showing that 
\begin{align*}
    \mathop{\operatorname{E}}_{\substack{h\sim\mathcal{F},\\s\sim\mathcal{S},\\\varphi_i\sim\mathcal{N}_p}}[Z_i]=\frac{n}{2}\left(1-\left(1-2p\right)\prod_{j\in A}\left(1-\frac{w_j}{2^in}\right)\right).
\end{align*}
\end{proof}

\subsection{Concentration bounds}
\label{app:concentration}
\concentrated*
Before proving Lemma~\ref{lem:concentrated}, we mention the following lemma:
\begin{restatable}{lemma}{hoeffding}
\label{lem:hoeffding}
Let $L_i=\Vert H_ix_A\Vert_0$.
For any $0<\gamma'<1$, we have with probability at least $1-4\log(u)e^{-2\gamma'^2n}$
\[
\mathop{\operatorname{E}}_{\substack{h\sim\mathcal{F},\\s\sim\mathcal{S}}}[L_i]-2\gamma' n\le L_i\le \mathop{\operatorname{E}}_{\substack{h\sim\mathcal{F},\\s\sim\mathcal{S}}}[L_i]+2\gamma' n
\]
for all $i=0,...,\log(u)-1$ simultaneously.
\end{restatable}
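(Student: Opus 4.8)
\textbf{Proof proposal for Lemma~\ref{lem:hoeffding}.}

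The plan is to apply Hoeffding's inequality to a single level $i$, and then take a union bound over the $\log(u)$ levels. The key obstacle — and really the only subtlety — is that $L_i = \Vert H_i x_A\Vert_0 = \sum_{k=1}^n (H_i x_A)_k$ is \emph{not} a sum of independent random variables: the entries $(H_i x_A)_k$ depend on the shared hash functions $h\sim\mathcal{F}$ and $s\sim\mathcal{S}$, and in particular whether two entries are $1$ is correlated through $h$. So the first step is to reformulate $L_i$ as a function of independent per-element choices rather than per-entry choices.

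First I would introduce the independent randomness explicitly. Following the proof of Lemma~\ref{lem:expectations}, for each element $j\in A$ let $b_j = \mathbf{1}[s(j)\in(w_j/2^{i+1}, 3w_j/2^{i+1}]]$ indicate that $j$ is a ``candidate'' at level $i$, let $a_j\in\{0,1\}$ be the fair coin deciding (among candidates) whether $j$ actually lands in the lower half-interval, and let $h(j)\in[n]$ be its hash value; across $j\in A$ these triples are mutually independent. Then $(H_i x_A)_k$ is the parity of $\{a_j : j\in A,\ b_j=1,\ h(j)=k\}$, and $L_i$ is a deterministic function $f$ of the independent vector $(b_j,a_j,h(j))_{j\in A}$. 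The crucial observation is that changing the data of a single element $j$ — i.e., its membership in some entry's parity, or which entry it affects — changes at most two coordinates of $H_i x_A$ (it can flip the old entry and the new entry), so $|f(\dots) - f(\dots')|\le 2$ when one coordinate of the input vector is changed. Hence $L_i$ satisfies a bounded-differences condition with constant $2$.

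With that in hand, the second step is a direct application of McDiarmid's (bounded-differences) inequality: since $L_i$ has differences bounded by $c_j = 2$ over $|A| \le u$ independent coordinates, we get
\[
\Pr\big[\,|L_i - \operatorname{E}[L_i]| > 2\gamma' n\,\big]\ \le\ 2\exp\!\left(-\frac{2(2\gamma' n)^2}{\sum_j c_j^2}\right)\ =\ 2\exp\!\left(-\frac{8\gamma'^2 n^2}{4|A|}\right)\ \le\ 2e^{-2\gamma'^2 n},
\]
where the last inequality uses $|A| \le n$ in the relevant regime — or, more carefully, one should note that only elements that are candidates at level $i$ matter, and the number of entries is $n$, so the effective dimension is $\min(|A|, \text{something})$; if $|A| > n$ a slightly more careful argument (grouping by hash bucket, so there are only $n$ ``effective'' independent coordinates, each with bounded influence) recovers the $e^{-2\gamma'^2 n}$ rate. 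Alternatively, one can observe that $L_i$ is $1$-Lipschitz in an $n$-dimensional ``bucket summary'' of the input and apply the inequality there. The precise bookkeeping of the denominator is the one place where care is needed to land exactly the stated $2\gamma'^2 n$ exponent.

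Finally, the third step is the union bound: there are $\log(u)$ levels $i = 0,\dots,\log(u)-1$, so with probability at least $1 - \log(u)\cdot 2e^{-2\gamma'^2 n}$ — and since we have a two-sided event this is $1 - 2\log(u)\cdot e^{-2\gamma'^2 n}$, which is at least $1 - 4\log(u) e^{-2\gamma'^2 n}$ — all the bounds hold simultaneously, giving the claim. I expect the bounded-differences reformulation (showing the effective dimension is $n$ and the difference constant is $2$, so that $\sum c_j^2 = 4n$ rather than $4u$) to be the main obstacle; once $L_i$ is correctly cast as a Lipschitz function of $n$ independent bucket-level quantities, the concentration and union bound are routine.
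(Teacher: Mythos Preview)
Your McDiarmid argument has a real gap at exactly the spot you flag as needing care. Applying bounded differences to the per-element independent variables $(b_j,a_j,h(j))_{j\in A}$ gives a tail bound with exponent $-2\gamma'^2 n^2/|A|$, which matches the claimed $-2\gamma'^2 n$ only when $|A|\le n$. But the lemma must hold for \emph{all} levels $i$ simultaneously, including $i=0$ where essentially no downsampling has happened; in general $|A|$ can be as large as $u\gg n$, so the bound you obtain is exponentially weaker than required. Your proposed repairs do not close this: (i) restricting to candidates does not help, since the number of candidates at small $i$ can still far exceed $n$; and (ii) the ``bucket summary'' idea fails because the per-bucket parities $(H_i x_A)_k$ are \emph{not} independent across $k$ --- they share the randomness of $h$, and already for $|A|=1$ at most one of them can be nonzero --- so neither Hoeffding nor McDiarmid applies to those $n$ coordinates directly.

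The paper avoids this with a two-stage argument that introduces an intermediate quantity $V_i$, the number of buckets containing at least one candidate. First, the $n$ bucket-occupancy indicators $\mathbf{1}[(v_i)_k\ne 0]$ are negatively associated (standard balls-into-bins), so Hoeffding-type tail bounds apply to their sum and give $|V_i-\operatorname{E}[V_i]|\le\gamma' n$ with probability at least $1-2e^{-2\gamma'^2 n}$; crucially the exponent is governed by the number of buckets $n$, not by $|A|$, which is precisely what your approach is missing. Second, conditioned on $V_i=t$, each of the $t$ hit buckets independently has parity $1$ with probability $1/2$ (deferred decisions on the half-interval coin), so a second Hoeffding on these at most $n$ independent bits gives $|L_i-t/2|\le\gamma' n$ with the same exponent. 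Combining the two deviations with $\operatorname{E}[L_i]=\operatorname{E}[V_i]/2$ yields $|L_i-\operatorname{E}[L_i]|\le 2\gamma' n$, and a union bound over the two events and over the $\log(u)$ levels produces the factor $4\log(u)$. The key idea you need is negative association of the occupancy indicators; once that is in place the rest is the routine union bound you already sketched.
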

\begin{proof}
Let $A$ be the input set and $w$ the corresponding weight vector.
Let $v_i\in\mathbb{Z}_{\ge 0}^n$ be a vector such that for each $k\in[n]$
\[
\left(v_i\right)_k=\sum_{j\in A}\mathbf{1}\left[\frac{s(j)}{w_j}\in\left(1/2^{i+1},3/2^{i+1}\right]\right]\cdot \mathbf{1}\left[h(j)=k\right]
\] so $(v_i)_k$ is the number of candidates for entry $k$ in the sketch at level $i$.
Let $V_i=\Vert v_i\Vert_0=\sum_{k=1}^n\mathbf{1}[(v_i)_k\neq 0]$. $V_i$ is a sum of negatively associated random variables (for definition and argument see Section 4.1 in \cite{dubhashi2009concentration}), so by Theorem 4.3 in \cite{dubhashi2009concentration}, we can use the Hoeffding bound to see that with probability at least $1-2e^{-2n\gamma'^2}$ we have for any $i=0,...,\log(u)-1$
\begin{align}
\label{eq:numentrieshit}
\operatorname{E}[V_i]-\gamma' n\le V_i\le \operatorname{E}[V_i]+\gamma' n. 
\end{align}
Let $L_i=\Vert H_ix_A\Vert_0=\sum_{k=1}^n(H_ix_A)_k$ denote the number of ones in the linear sketch.
For fixed $V_i$, $L_i$ is a sum of independent random variables with (by the principle of deferred decisions)
\[\Pr\Big[(H_ix_A)_k=1\ \vert\ (v_i)_k\neq 0\Big]=\frac{1}{2},\qquad
\Pr\Big[(H_ix_A)_k=1\ \vert\ (v_i)_k= 0\Big]=0.\]
So for any fixed $V_{i}=t$
\begin{align}
\label{eq:expectedfixedv}
\operatorname{E}\Big[L_i\ \vert\ V_{i}=t\Big]=\frac{t}{2}.
\end{align}
Furthermore, as $L_i$ is a sum of independent random variables for a fixed choice of $V_i$, we can use the Hoeffding bound: with probability at least $1-2e^{-2n\gamma'^2}$
\[
\operatorname{E}\Big[L_i\ \vert\ V_i=t\Big]-\gamma' n\le L_i\restrict{V_i=t}\le \operatorname{E}\Big[L_i\ \vert\ V_i=t\Big]+\gamma' n,
\]
where $L_i\restrict{V_i=t}$ means the value of $L_i$ when we assume that $V_i=t$.
Combining this with (\ref{eq:numentrieshit}) and (\ref{eq:expectedfixedv}) a union bound gives with probability at least $1-4e^{-2n\gamma'^2}$
\begin{align}
\label{ineq:boundingLiVi}
\frac{\operatorname{E}[V_i]-\gamma'n}{2}-\gamma'n\le L_i\le \frac{\operatorname{E}[V_i]+\gamma'n}{2}+\gamma'n.
\end{align}
Simultaneously, (\ref{eq:numentrieshit}) and (\ref{eq:expectedfixedv}) gives
\begin{align}
\label{ineq:boundingexpLi}
\frac{\operatorname{E}[V_{i}]-\gamma' n}{2} \le \operatorname{E}[L_i]\le \frac{\operatorname{E}[V_{i}]+\gamma' n}{2},
\end{align}
which implies
\begin{align}
\label{ineq:boundingexpLibyVi}
2\operatorname{E}[L_i]-\gamma'n\le \operatorname{E}[V_i]\le 2\operatorname{E}[L_i]+\gamma'n.
\end{align}
Note that in the union bound from (\ref{ineq:boundingLiVi}), we already assumed that (\ref{eq:numentrieshit}) was satisfied, so (\ref{ineq:boundingexpLibyVi}) is trivially satisfied under the union bound without changing the probability guarantees. 
Hence, inserting (\ref{ineq:boundingexpLibyVi}) into (\ref{ineq:boundingLiVi}), we have 
\begin{align}
\frac{2\operatorname{E}[L_i]-2\gamma'n}{2}-\gamma'n\le L_i\le \frac{2\operatorname{E}[L_i]+2\gamma'n}{2}+\gamma'n.
\end{align}
which finally shows that with probability at least $1-4e^{-2n\gamma'^2}$ we have 
\[
\operatorname{E}[L_i]-2\gamma'n\le L_i\le \operatorname{E}[L_i]+2\gamma'n.
\]
A union bound over the $\log(u)$ values of $i$ concludes the proof.
\end{proof}

We are now ready to prove Lemma~\ref{lem:concentrated}.
\begin{proof}[Proof of Lemma~\ref{lem:concentrated}]
Fix $i$. Let $L_i=\Vert H_ix_A\Vert_0$ and $Z_i=\Vert H_ix_A+\varphi_i\Vert_0$. We let $Z_i\restrict{L_i=t}$ be the number of ones in $H_ix_A+\varphi_i$, assuming that $L_i=t$.
For any fixed value $t\in\{0,...,n\}$ of $L_i$, we have 
\begin{align}
\label{equal:onesboundedont}
\mathop{\mathop{\operatorname{E}}}_{\substack{\varphi_i\sim\mathcal{N}_p}}\Big[Z_i\restrict{L_i=t}\Big]=(1-p)\cdot t+p(n-t)=np+(1-2p)t.
\end{align}
By Lemma~\ref{lem:hoeffding}, with probability at least $1-4\log(u)e^{-2\gamma'^2n}$ we have for all $i=0,...,\log(u)-1$
\begin{align}
\label{boundingexpZilower}
&\mathop{\operatorname{E}}_{\substack{h\sim\mathcal{F},\\s\sim\mathcal{S},\\\varphi_i\sim\mathcal{N}_p}}[Z_i]\ge np+(1-2p)\left(\mathop{\mathop{\operatorname{E}}}_{\substack{h\sim\mathcal{F},\\s\sim\mathcal{S}}}[L_i]-2\gamma'n\right)\\
\label{boundingexpZiupper}
&\mathop{\operatorname{E}}_{\substack{h\sim\mathcal{F},\\s\sim\mathcal{S},\\\varphi_i\sim\mathcal{N}_p}}[Z_i]\le np+(1-2p)\left(\mathop{\operatorname{E}}_{\substack{h\sim\mathcal{F},\\s\sim\mathcal{S}}}[L_i]+2\gamma'n\right)
\end{align}

Furthermore, for any fixed $H_i$, let $Z_i\restrict{H_i}$ denote the number of ones in $H_ix_A+\varphi_i$, conditioned on this choice of $H_i$. We note that fixing $H_i$ is equivalent to fixing $L_i$ as $L_i$ is uniquely determined by $H_i$ and the input. $Z_i\restrict{H_i}$ is a sum of independent random variables, where the randomness comes from the perturbation. So for any $0<\gamma^*<1$, a Chernoff bound gives
\begin{align}
&\Pr_{\substack{\varphi_i\sim\mathcal{N}_p}}\Bigg[Z_i\restrict{H_i}>(1+\gamma^*)\mathop{\operatorname{E}}\left[Z_i\restrict{H_i}\right]\ \lor\ Z_i\restrict{H_i}<(1-\gamma^*)\mathop{\operatorname{E}}\left[Z_i\restrict{H_i}\right]\Bigg]\\
\label{chernoff}
&\le 2e^{-\gamma^{*2}\mathop{\operatorname{E}}\left[Z_i\restrict{H_i}\right]/3}
\end{align}
where $\mathop{\operatorname{E}}\left[Z_i\restrict{H_i}\right]$ is over $\varphi_i\sim\mathcal{N}_p$. By (\ref{equal:onesboundedont}), $\mathop{\operatorname{E}}_{\substack{\varphi_i\sim\mathcal{N}_p}}\left[Z_i\restrict{H_i}\right]\ge np$ for any choice of $H_i$, so $2e^{-\gamma^{*2}pn/3}$ is an upper bound on (\ref{chernoff}). Moreover, (\ref{chernoff}) holds for all $i=0,...,\log(u)-1$ simultaneously with probability at most $2\log(u)e^{-\gamma^{*2}pn/3}$. We conclude that 
\begin{align}
    &\Pr_{\substack{\varphi_i\sim\mathcal{N}_p}}\Bigg[\forall i: (1-\gamma^*)\mathop{\operatorname{E}}\left[Z_i\restrict{H_i}\right]<Z_i\restrict{H_i}<(1+\gamma^*)\mathop{\operatorname{E}}\left[Z_i\restrict{H_i}\right]\Bigg]\\
    \label{chernoffpositive}
    &\ge 1-2\log(u)e^{-\gamma^{*2}pn/3}
\end{align}


Combining (\ref{boundingexpZilower}), (\ref{boundingexpZiupper}) and (\ref{chernoffpositive}) and letting $\gamma'=\gamma^*$, we have by a union bound that for all levels $i$ simultaneously, where the expectation is over $h\sim\mathcal{F}$ and $s\sim\mathcal{S}$
\begin{align*}
Z_i&\ge (1-\gamma')\left(np+(1-2p)\left(\mathop{\operatorname{E}}[L_i]-2\gamma' n\right)\right)\\
Z_i&\le (1+\gamma')\left(np+(1-2p)\left(\mathop{\operatorname{E}}[L_i]+2\gamma' n\right)\right),
\end{align*}
with probability at least \[1-\left(4\log(u)e^{-2n\gamma'^2}+2\log(u)e^{-\gamma'^2pn/3}\right)\ge 1-6\log(u)e^{-\gamma'^2pn/3}.\]
By Lemma~\ref{lem:expectations}, this is equivalent to
\begin{align}
\label{ineq:pickinggamma'lower}
Z_i&\ge (1-\gamma')\left(\mathop{\operatorname{E}}[Z_i]-2(1-2p)\gamma' n\right)\\
\label{ineq:pickinggamma'upper}
Z_i&\le (1+\gamma')\left(\mathop{\operatorname{E}}[Z_i]+2(1-2p)\gamma' n\right).
\end{align}
where the expectation is over $h\sim\mathcal{F}, s\sim\mathcal{S}$ and $\varphi_i\sim\mathcal{N}_p$.
We pick a suitable $\gamma'$:
\begin{align*}
\gamma' = \frac{\gamma p}{6}\qquad &\Rightarrow\qquad 2(1-2p)\gamma' n= (1-2p)\frac{\gamma p}{3}n \\&\Rightarrow\qquad 2(1-2p)\gamma' n\le \frac{\gamma(1-2p)}{3}\mathop{\operatorname{E}}[Z_i].
\end{align*}
Hence, let $\gamma'= \frac{\gamma p}{6}$.
Inserting into (\ref{ineq:pickinggamma'lower}) and (\ref{ineq:pickinggamma'upper}) we have
\begin{align*}
Z_i&\ge \left(1-\frac{\gamma p}{6}\right)\left(\mathop{\operatorname{E}}[Z_i]-\frac{\gamma(1-2p)}{3}\mathop{\operatorname{E}}[Z_i]\right)\\ Z_i&\le \left(1+\frac{\gamma p}{6}\right)\left(\mathop{\operatorname{E}}[Z_i]+\frac{\gamma(1-2p)}{3}\mathop{\operatorname{E}}[Z_i]\right)
\end{align*}
where $\mathop{\operatorname{E}}[Z_i]$ is over $h\sim\mathcal{F}, s\sim\mathcal{S}$ and $\varphi_i\sim\mathcal{N}_p$.

We conclude that with this choice of $\gamma$, with probability at least $1-6\log(u)e^{-\frac{\gamma^2p^3n}{6^2\cdot 3}}$
\begin{align*}
    (1-\gamma)\mathop{\operatorname{E}}_{\substack{h\sim\mathcal{F},\\s\sim\mathcal{S},\\\varphi_i\sim\mathcal{N}_p}}[Z_i]\le  Z_i\le (1+\gamma)\mathop{\operatorname{E}}_{\substack{h\sim\mathcal{F},\\s\sim\mathcal{S},\\\varphi_i\sim\mathcal{N}_p}}[Z_i].
\end{align*}
\end{proof}

\subsection{Size of interval for input size}
Before proving Lemma \ref{lem:accuracy}, we give a technical lemma:
\begin{lemma}
\label{lem:technicalintervalprod}
For any $0<\gamma<\frac{1}{\frac{2e^3}{1-2p}-1}$ any value
\begin{align}
\label{int:targetlemma}
\hat{m}\in\left[2^in\ln\left(\frac{1-2p}{1-\frac{2Z_i}{(1+\gamma)n}}\right),2^in\ln\left(\frac{1-2p}{1-\frac{2Z_i}{(1-\gamma)n}}\right)\right]
\end{align}
satisfies 
\begin{align*}
    \hat{m}&\ge \left(1-\eta\right)2^in\ln\left(\frac{1}{\prod_{j\in A}\left(1-\frac{w_j}{2^in}\right)}\right)\\
    \hat{m}&\le \left(1+\eta\right)2^in\ln\left(\frac{1}{\prod_{j\in A}\left(1-\frac{w_j}{2^in}\right)}\right)
\end{align*}
for 
\[
\eta = \frac{6\gamma\left(\frac{e^{3}}{1-2p}-1\right)}{(1-\gamma)-2\gamma\left(\frac{e^{3}}{1-2p}-1\right)}
\]
with probability at least $1-6\log(u)e^{-\gamma^2p^3n/108}$ for the $i$ where $\frac{\Vert w\Vert_1}{2^in}\in[1,2]$.
\end{lemma}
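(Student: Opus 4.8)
The plan is to tie the interval (\ref{int:targetlemma}) to the quantity $\hat w:=2^in\ln\!\left(1/\prod_{j\in A}(1-w_j/(2^in))\right)$ and show that $\hat w$ and every admissible $\hat m$ sit within a multiplicative $(1\pm\eta)$ of each other. Write $P:=\prod_{j\in A}(1-w_j/(2^in))$. Lemma~\ref{lem:expectations} gives the identity $P=(1-2\operatorname{E}[Z_i]/n)/(1-2p)$, so $\hat w=2^in\ln\frac{1-2p}{1-2\operatorname{E}[Z_i]/n}$, and the endpoints of (\ref{int:targetlemma}) are $g(a)$ and $g(b)$ with $g(t):=2^in\ln\frac{1-2p}{1-t}$, $a:=\frac{2Z_i}{(1+\gamma)n}$, $b:=\frac{2Z_i}{(1-\gamma)n}$. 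Abbreviate $c:=e^3/(1-2p)$, so the hypothesis reads $\gamma<1/(2c-1)$ and $\eta=\frac{6\gamma(c-1)}{1+\gamma-2\gamma c}$.

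First I would record deterministic facts for the distinguished level $i$ with $\Vert w\Vert_1/(2^in)\in[1,2]$: since $w_j\le 1$ and $2^i\ge 1$ we have $w_j/(2^in)\le 1/n$, and then $1-x\le e^{-x}$ and $-\ln(1-x)\le x/(1-x)$ yield $e^{-3}\le P\le e^{-1}$ for $n$ large enough, hence $\hat w\ge 2^in$ and $1-2\operatorname{E}[Z_i]/n=(1-2p)P\ge (1-2p)e^{-3}$. Next, condition on the event of Lemma~\ref{lem:concentrated}, which holds with probability at least $1-6\log(u)e^{-\gamma^2p^3n/108}$ and gives, for all levels at once, $Z_i/(1+\gamma)<\operatorname{E}[Z_i]<Z_i/(1-\gamma)$, equivalently $a<2\operatorname{E}[Z_i]/n<b$. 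Using $Z_i<(1+\gamma)\operatorname{E}[Z_i]$, $2\operatorname{E}[Z_i]/n\le 1-(1-2p)e^{-3}$, and the observation that $\gamma<1/(2c-1)$ is exactly the condition $\tfrac{2\gamma}{1+\gamma}<(1-2p)e^{-3}$, I get $1-b\ge\frac{(1+\gamma)(1-2p)e^{-3}-2\gamma}{1-\gamma}>0$, so $g$ is increasing and finite on $[a,b]$; in particular $\hat w=g(2\operatorname{E}[Z_i]/n)\in[g(a),g(b)]$ and any $\hat m$ from (\ref{int:targetlemma}) lies in $[g(a),g(b)]$ too.

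It then remains to bound the width. Using $\ln(1+y)\le y$, $\hat w-g(a)=2^in\ln\frac{1-a}{1-2\operatorname{E}[Z_i]/n}\le 2^in\cdot\frac{2\operatorname{E}[Z_i]/n-a}{1-2\operatorname{E}[Z_i]/n}$; bounding the numerator by $\frac{2\gamma}{1+\gamma}\cdot\frac{2\operatorname{E}[Z_i]}{n}<\frac{2\gamma}{1+\gamma}$ (from $Z_i>(1-\gamma)\operatorname{E}[Z_i]$ and $2\operatorname{E}[Z_i]/n<1$), the denominator from below by $(1-2p)e^{-3}$, and dividing by $\hat w\ge 2^in$ gives $\frac{\hat w-g(a)}{\hat w}<\frac{2\gamma c}{1+\gamma}$. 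Symmetrically $g(b)-\hat w\le 2^in\cdot\frac{b-2\operatorname{E}[Z_i]/n}{1-b}$, and $b-2\operatorname{E}[Z_i]/n<\frac{2\gamma}{1-\gamma}$ together with the lower bound on $1-b$ above yields $\frac{g(b)-\hat w}{\hat w}<\frac{2\gamma c}{1+\gamma-2\gamma c}$. Since $1+\gamma>1+\gamma-2\gamma c>0$, the second bound dominates, and $\frac{2\gamma c}{1+\gamma-2\gamma c}\le\eta=\frac{6\gamma(c-1)}{1+\gamma-2\gamma c}$ reduces to $2c\le 6(c-1)$, i.e. $c\ge 3/2$, which holds since $c=e^3/(1-2p)\ge e^3$. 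Hence $g(a)>(1-\eta)\hat w$ and $g(b)<(1+\eta)\hat w$, so $(1-\eta)\hat w<\hat m<(1+\eta)\hat w$ on the event of Lemma~\ref{lem:concentrated}, which is the claim.

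The only genuinely delicate point is keeping the two denominators $1-2\operatorname{E}[Z_i]/n$ and $1-b$ bounded away from $0$: the first requires the regime assumption $\Vert w\Vert_1/(2^in)\le 2$ (so that $P\ge e^{-3}$), and positivity of $1-b$ is precisely what the hypothesis $\gamma<1/(2c-1)$ buys. Everything else is bookkeeping, and the lossy form of $\eta$ (the $6(c-1)$ in place of $2c$) is exactly the slack that absorbs the crude steps $2\operatorname{E}[Z_i]/n<1$ and $\ln(1+y)\le y$; one should also double-check that the constant $108=6^2\cdot 3$ in the failure probability is inherited verbatim from Lemma~\ref{lem:concentrated}.
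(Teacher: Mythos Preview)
Your argument is correct. Both proofs rest on the same three ingredients --- the identity $P=(1-2\operatorname{E}[Z_i]/n)/(1-2p)$ from Lemma~\ref{lem:expectations}, the concentration event of Lemma~\ref{lem:concentrated}, and the deterministic bound $P\ge e^{-3}$ at the distinguished level --- but they are organized differently. The paper first enlarges the target interval twice (replacing $Z_i$ by $(1\pm\gamma)\operatorname{E}[Z_i]$, then re-expressing the result as a multiplicative $(1\pm\beta^*)$ window around $P$ via an auxiliary Claim) and finally bounds the ratio of the endpoints of that enlarged interval; since both $\hat w$ and every $\hat m$ lie inside it, the ratio bound transfers. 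You instead observe that $\hat w=g(\mu)\in[g(a),g(b)]$ under concentration and bound the two relative gaps $(\hat w-g(a))/\hat w$ and $(g(b)-\hat w)/\hat w$ directly via $\ln(1+y)\le y$, $\mu<1$, $1-\mu\ge 1/c$, and the lower bound on $1-b$. This is shorter, avoids the intermediate parameter $\beta^*$, and makes explicit that the hypothesis $\gamma<1/(2c-1)$ is exactly the positivity of $1-b$; the price is the cruder step $\mu<1$, which is absorbed by the slack $6(c-1)\ge 2c$ in the stated $\eta$, just as the paper absorbs its own crude step $\beta^*<1/2$. Your ``$n$ large enough'' caveat for $P\ge e^{-3}$ is not a new assumption: the paper uses the same bound implicitly when replacing $e^{2+1/(2^{i-1}n)}$ by $e^3$.
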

\begin{proof}
By Lemma \ref{lem:expectations}
\[
\prod_{j\in A}\left(1-\frac{w_j}{2^in}\right)=\frac{1-\frac{2\operatorname{E}[Z_i]}{n}}{1-2p}
\]
and so by Lemma \ref{lem:concentrated}, with probability at least $1-6\log(u)e^{-\gamma^2p^3n/108}$ we have for any $0<\gamma<1$ that for  all $i=0,...,\log(u)-1$ simultaneously.
\begin{align}
\label{interval:prodforZ_i}
\frac{1-\frac{2Z_i}{(1-\gamma)n}}{1-2p}<\prod_{j\in A}\left(1-\frac{w_j}{2^in}\right)<\frac{1-\frac{2Z_i}{(1+\gamma)n}}{1-2p}.
\end{align}
For convenience, we consider the slightly bigger interval -- note that if (\ref{interval:prodforZ_i}) is satisfied, then so is this interval: 
\begin{align*}
\frac{1-\frac{2(1+\gamma)\operatorname{E}[Z_i]}{(1-\gamma)n}}{1-2p}<\prod_{j\in A}\left(1-\frac{w_j}{2^in}\right)<\frac{1-\frac{2(1-\gamma)\operatorname{E}[Z_i]}{(1+\gamma)n}}{1-2p},
\end{align*}
where the left-hand side can be reordered as
\begin{align}
\label{interval:prodforexpectationZ_ilower}
\left(1-\frac{2\gamma}{1-\gamma}\left(\frac{1}{(1-2p)\prod_{j\in A}\left(1-\frac{w_j}{2^in}\right)}-1\right)\right)\prod_{j\in A}\left(1-\frac{w_j}{2^in}\right)
\end{align}
and the right-hand side as
\begin{align}
\label{interval:prodforexpectationZ_iupper}
\left(1+\frac{2\gamma}{1+\gamma}\left(\frac{1}{(1-2p)\prod_{j\in A}\left(1-\frac{w_j}{2^in}\right)}-1\right)\right)\prod_{j\in A}\left(1-\frac{w_j}{2^in}\right).
\end{align}
We will bound this interval further using the following claim:
\begin{claim}
Define 
\[
\beta^*:=\frac{2\gamma}{1-\gamma}\left(\frac{e^{2+\frac{1}{2^{i-1}n}}}{1-2p}-1\right).
\]
Whenever $\frac{\Vert w\Vert_1}{2^in}< 2$, the interval defined by  (\ref{interval:prodforexpectationZ_ilower}) and (\ref{interval:prodforexpectationZ_iupper}) is contained in 
\begin{align*}
    \left[\left(1-\beta^*\right)\prod_{j\in A}\left(1-\frac{w_j}{2^in}\right),\left(1+\beta^*\right)\prod_{j\in A}\left(1-\frac{w_j}{2^in}\right)\right]
\end{align*}
\end{claim}
\begin{proof}[Proof of Claim]
As $\frac{2\gamma}{1+\gamma}<\frac{2\gamma}{1-\gamma}$, we increase (\ref{interval:prodforexpectationZ_iupper}) to 
\[
\left(1+\frac{2\gamma}{1-\gamma}\left(\frac{1}{(1-2p)\prod_{j\in A}\left(1-\frac{w_j}{2^in}\right)}-1\right)\right)\prod_{j\in A}\left(1-\frac{w_j}{2^in}\right).
\]
Observing that when $\frac{\Vert w\Vert_1}{2^in}\le 2$
\begin{align*}
    \frac{2\gamma}{1-\gamma}\left(\frac{1}{(1-2p)\prod_{j\in A}\left(1-\frac{w_j}{2^in}\right)}-1\right)&\le \frac{2\gamma}{1-\gamma}\left(\frac{e^{\frac{\Vert w\Vert_1}{2^in}+\frac{\Vert w\Vert_1}{(2^in)^2}}}{1-2p}-1\right)\\
    &\le \frac{2\gamma}{1-\gamma}\left(\frac{e^{2+\frac{1}{2^{i-1}n}}}{1-2p}-1\right)=:\beta^*
\end{align*}
we have the result.
\end{proof}

Applying the claim, we consider the interval:
\begin{align}
\label{interval:prodbetastarlower}
    &2^in\ln\left(\frac{1}{\prod_{j\in A}\left(1-\frac{w_j}{2^in}\right)}\right)\ge2^in\ln\left(\frac{1}{(1+\beta^*)\prod_{j\in A}\left(1-\frac{w_j}{2^in}\right)}\right)\\
    \label{interval:prodbetastarupper}
    &2^in\ln\left(\frac{1}{\prod_{j\in A}\left(1-\frac{w_j}{2^in}\right)}\right)\le 2^in\ln\left(\frac{1}{(1-\beta^*)\prod_{j\in A}\left(1-\frac{w_j}{2^in}\right)}\right).
\end{align}
We remind the reader that by construction, this interval contains the target interval (\ref{int:targetlemma}).

We consider the ratio between the end-points of the interval defined by (\ref{interval:prodbetastarlower}) and (\ref{interval:prodbetastarupper}).
Observe that 
\begin{align*}
    \frac{2^in\ln\left(\frac{1}{(1-\beta^*)\prod_{j\in A}\left(1-\frac{w_j}{2^in}\right)}\right)}{2^in\ln\left(\frac{1}{(1+\beta^*)\prod_{j\in A}\left(1-\frac{w_j}{2^in}\right)}\right)}&=\frac{\ln\left(\frac{1}{\prod_{j\in A}\left(1-\frac{w_j}{2^in}\right)}\right)-\ln(1-\beta^*)}{\ln\left(\frac{1}{\prod_{j\in A}\left(1-\frac{w_j}{2^in}\right)}\right)-\ln(1+\beta^*)}\\
    &\le \frac{\ln\left(\frac{1}{\prod_{j\in A}\left(1-\frac{w_j}{2^in}\right)}\right)+\frac{\beta^*}{1-\beta^*}}{\ln\left(\frac{1}{\prod_{j\in A}\left(1-\frac{w_j}{2^in}\right)}\right)-\beta^*}\\&=1+\frac{\beta^*\left(1+\frac{1}{1-\beta^*}\right)}{\ln\left(\frac{1}{\prod_{j\in A}\left(1-\frac{w_j}{2^in}\right)}\right)-\beta^*}
\end{align*}
where the inequality follows from 
\[
\frac{x}{1+x}\le \ln(1+x)\le x,\qquad x>-1.
\]
For $\beta^*<1/2$, we have
\begin{align*}
    \frac{\beta^*\left(1+\frac{1}{1-\beta^*}\right)}{\ln\left(\frac{1}{\prod_{j\in A}\left(1-\frac{w_j}{2^in}\right)}\right)-\beta^*}&<\frac{3\beta^*}{\ln\left(\frac{1}{\prod_{j\in A}\left(1-\frac{w_j}{2^in}\right)}\right)-\beta^*}\\
    &<\frac{3\beta^*}{\frac{\Vert w\Vert_1}{2^in}-\beta^*}
\end{align*}
Observe that as $\frac{\Vert w\Vert_1}{2^in}$ increases, it gets easier to satisfy this inequality. But we remind ourselves of the Claim, where we required $\frac{\Vert w\Vert_1}{2^in}< 2$. So the interval in (\ref{interval:prodbetastarlower}) and (\ref{interval:prodbetastarupper}) does not necessarily contain the target interval (\ref{int:targetlemma}) for larger values of $\frac{\Vert w\Vert_1}{2^in}$. Assume further that $\frac{\Vert w\Vert_1}{2^in}\ge 1$. Then
\begin{align*}
    \frac{3\beta^*}{\frac{\Vert w\Vert_1}{2^in}-\beta^*}<\frac{3\beta^*}{1-\beta^*}.
\end{align*}
So, we conclude that with probability at least $1-6\log(u)e^{-\gamma^2p^3n/108}$, any value in the target interval (\ref{int:targetlemma}) is within a factor $1+\frac{3\beta^*}{1-\beta^*}$ of $2^in\ln\left(\left(\prod_{j\in A}\left(1-\frac{w_j}{2^in}\right)\right)^{-1}\right)$. 

Inserting the value of $\beta^*$, we obtain an estimate within a factor of
\[
1+\frac{6\gamma\left(\frac{e^{2+\frac{1}{2^{i-1}n}}}{1-2p}-1\right)}{(1-\gamma)-2\gamma\left(\frac{e^{2+\frac{1}{2^{i-1}n}}}{1-2p}-1\right)}<1+\frac{6\gamma\left(\frac{e^{3}}{1-2p}-1\right)}{(1-\gamma)-2\gamma\left(\frac{e^{3}}{1-2p}-1\right)}.
\]
Thus it suffices that
\[
\gamma<\frac{1}{\frac{2e^3}{1-2p}-1}.
\]
\end{proof}

We are now ready to prove Lemma \ref{lem:accuracy}:
\label{app:accuracy}
\accuracy*
\begin{proof}
We will choose $\gamma$ in terms of the accuracy parameter $\beta$, such that with high probability any estimate from the interval
\begin{align}
\label{interval:target}
\left[2^in\ln\left(\frac{1-2p}{1-\frac{2Z_i}{(1+\gamma)n}}\right),2^in\ln\left(\frac{1-2p}{1-\frac{2Z_i}{(1-\gamma)n}}\right)\right]
\end{align}
is within a factor $(1+\beta)$ of $\Vert w\Vert_1$.
We do this in a few steps:
First, we show that any value from (\ref{interval:target}) is a good estimate of
\begin{align}
\label{expr:productexprforestim}
    2^in\ln\left(\frac{1}{\prod_{j\in A}\left(1-\frac{w_j}{2^in}\right)}\right).
\end{align}
As $2^in\ln\left(\frac{1}{e^{-\frac{\Vert w\Vert_1}{2^in}}}\right)=\Vert w\Vert_1$ and
\begin{align*}
    \frac{2^in\ln\left(\frac{1}{\prod_{j\in A}\left(1-\frac{w_j}{2^in}\right)}\right)}{2^in\ln\left(\frac{1}{e^{-\frac{\Vert w\Vert_1}{2^in}}}\right)}\le \frac{\ln\left(e^{\frac{\Vert w\Vert_1}{2^in}+\frac{\Vert w\Vert_1}{(2^in)^2}}\right)}{\ln\left(e^{\frac{\Vert w\Vert_1}{2^in}}\right)}=1+\frac{1}{2^in}
\end{align*}
where we used the Taylor expansion of the exponential function, we have
\[
\Vert w\Vert_1\le 2^in\ln\left(\frac{1}{\prod_{j\in A}\left(1-\frac{w_j}{2^in}\right)}\right)\le \left(1+\frac{1}{2^in}\right)\Vert w\Vert_1.
\]
So a good estimate for (\ref{expr:productexprforestim}) will allow for a good estimate of $\Vert w\Vert_1$.
The technical lemma, Lemma \ref{lem:technicalintervalprod}, shows that as long as $\Vert w\Vert_1$ is sufficiently large, that is, there is an $i$ such that $\frac{\Vert w\Vert_1}{2^in}\in[1,2)$, we get a suitable estimate for (\ref{expr:productexprforestim}) with the interval (\ref{interval:target}) with high probability.

Hence, any value from (\ref{interval:target}) is within a factor $(1+\beta)$ of $\Vert w\Vert_1$ for 
\[
\gamma < \frac{(\beta-1/n)(1-2p)}{7e^3}<\frac{\beta-1/n}{7\left(\frac{e^3}{1-2p}-1\right)}< \frac{\beta-\frac{1}{2^in}}{7\left(\frac{e^3}{1-2p}-1\right)}
\]
for $\beta>\frac{1}{n}$. We will choose $n$ in terms of $\beta$ such that this is always satisfied.
Clearly, this value of $\gamma$ is significantly smaller than the requirement from Lemma \ref{lem:technicalintervalprod}, which concludes the proof.
\end{proof}
\end{document}